\newtheorem{theorem}{Theorem}
\newtheorem{lemma}[theorem]{Lemma}
\newtheorem{corollary}[theorem]{Corollary}
\newtheorem{definition}{Definition}
\newenvironment{proof}{\noindent{\bf Proof:}}{
\hspace*{\fill} $\Box$ \vskip \belowdisplayskip}
\newcommand{\beqn}{\begin{equation}}
\newcommand{\eeqn}{\end{equation}}
\newcommand{\ep}{\varepsilon}
\newcommand{\ve}{\varepsilon}
\newcommand{\maxwt}{{\sc Max-Weight} }
\newcommand{\maxwtbt}{{\sc Max-Weight($\beta$)} }
\newcommand{\Ab} {$A(\omega,\epsilon,b)$}
\newcommand{\Ac} {$A(\omega,\epsilon,0)$}
\newcommand{\Al} {Algorithm}
\newcommand{\A} {$A(\omega,\epsilon)$}
\newcommand{\ad} {Adversary}
\newcommand{\om}{\omega}
\newcommand{\red}{\color{black}}
\newcommand{\blue}{\color{black}}
\newcommand{\col}{\color{black}}
\newcommand{\coll}{\color{black}}
\date{}
\begin{document}

\title{Stability of the Max-Weight
Protocol in Adversarial Wireless Networks}

\author{\IEEEauthorblockN{Sungsu Lim,}
\IEEEauthorblockA{KAIST,
ssungssu@kaist.ac.kr}\\
\and
\IEEEauthorblockN{Kyomin Jung,}
\IEEEauthorblockA{KAIST,
kyomin@kaist.edu}\\
\and
\IEEEauthorblockN{Matthew Andrews,}
\IEEEauthorblockA{Bell Labs,
andrews@research.belllabs.com}}

\maketitle

\begin{abstract}
In this paper we consider the \maxwt protocol for routing and
scheduling in wireless networks under an {\em adversarial} model.
This protocol has received a significant amount of attention dating
back to the papers of Tassiulas and Ephremides.  In particular, this
protocol is known to be {\em throughput-optimal} 
whenever the traffic patterns and propagation conditions are governed
by a {\em stationary stochastic process}.

However, the standard proof of throughput optimality (which is based
on the negative drift of a quadratic potential function) does not hold
when the traffic patterns and {\coll the edge capacity changes over time are} governed
by an arbitrary adversarial process. Such an
environment {\coll appears frequently} in many practical wireless scenarios {\coll when} the
assumption that channel conditions are governed by a stationary
stochastic process does not readily apply.

In this paper we prove that even in {\coll the above} adversarial setting, the
\maxwt protocol
{\red keeps the
queues in the network stable (i.e.\ keeps the queue sizes bounded) whenever this is feasible by some routing and scheduling algorithm.}
However, the proof is somewhat
more complex than the negative potential drift argument that applied in
the stationary case.
Our proof holds {\coll for any} arbitrary interference relationships
among edges. We also prove the stability of $\ep$-approximate
\maxwt under the adversarial model.
We conclude the paper with a discussion of queue sizes in the
adversarial model as well as a set of simulation results.
\end{abstract}

\section{Introduction}
We consider the performance of the Max-Weight routing and scheduling
algorithm in adversarial networks. Max-Weight has been one of
the most studied algorithms \cite{AnshelevichKK02, McKeownAW96, MuthukrishnanR98} since it was introduced in the work of
Tassiulas and Ephremides \cite{TassiulasE-92, TassiulasE-93} and Awerbuch and Leighton \cite{AwerbuchT-93, AwerbuchT-94}.
{\blue
The key property
of Max-Weight is that for a fixed set of flows it is {\em throughput optimal} in stochastic
networks with a wide variety of scenarios \cite{TassiulasE-92,Georgiadis06,Shah06}, even
though it may fail to provide maximum stability in a scenario
with flow-level dynamics \cite{Ven09}. That is, for a fixed set of flows the Max-Weight protocol
keeps the queues in the network stable whenever this is
feasible by some routing and scheduling algorithm. Moreover,
we can obtain a bound on the amount of packets in the system
that is polynomial in the network size.}

However, the standard
analyses of the Max-Weight algorithm make critical use of the fact
that the channel conditions and the traffic patterns are governed by
stationary stochastic processes.
The stationary stochastic model deals with the case where traffic patterns do not deviate much from their time-average behavior. On the other hand, we shall consider the worst case traffic scenario modeled by adversarial models. If an adversary chooses traffic patterns and interference conditions, and edge capacities change over time in an arbitrary way, then the question remains as to whether a system running under Max-Weight can be unstable.
It is important to model the worst (adversarial) case because non-evenly distributed traffic patterns are observed over time in many queuing models.
{\blue
A typical adversarial scenario is a military
communication network, in which there could exist adversarial
jammers. Once it is jammed, the victim link will have zero
capacity or very weak capacity. Ensuring stability under the
worst case is crucial in many such systems. The aim of the current paper is to resolve this question.}

Previous work has shed some light on this
issue. In \cite{AndrewsZ-focs02} it was shown that for a single transmitter
sending data over one-hop edges to a set of mobile users, if the set
of non-zero channel rates can approach zero arbitrarily closely, then no
protocol can be stable. However, since this is a fairly unnatural condition, \cite{AndrewsZ-focs02} looked at the more natural setting in which all rate
sets are finite. For this case a stable protocol was given but it was
a somewhat unnatural protocol that relies on a lot of bookkeeping. The
stability of a more natural protocol such as Max-Weight was left
unresolved.

{\coll In some adversarial setting, the stability of
Max-Weight in static networks was proven in \cite{AielloKOR98},
and the stability of Max-Weight was proven in dynamic networks with single-commodity demands \cite{AnshelevichKK02} and multicommodity demands \cite{AndrewsJS-stoc07}.}
However, these proofs only
applied to the case when each edge could be scheduled independently
(in other words, the decision to transmit on an edge has no affect on the
edge rates on other edges),
This is obviously not a suitable
model for wireless transmissions in which edges can clearly affect each
other. As discussed in \cite{AndrewsZ-focs02}, the stability of
Max-Weight was not known in the adversarial setting for the case of
interfering edges, even if we only have one node that transmits.

In this paper we resolve the question of the stability of Max-Weight in
general adversarial networks. We present an adversarial model of
interfering edges and show that the Max-Weight policy always maintains
stability as long as we are strictly within the network stability region,
{\coll
even when the stability region is allowed to change over time.
We consider a very general adversarial model that can be applied to all the possible interference conditions, including k-hop interference \cite{Sharma-06}, independent set constraint \cite{GJSS08, GJSS09}, and node exclusive constraints \cite{AndrewsKRSVW00, AndrewsKRSVW01, BruceS-88, NeelyMR02}.
}
Our proof gives a bound {\coll on the queue size} that is
exponential in the network size. (This is unlike the stochastic case.)
However, we also demonstrate (using an example inspired by
\cite{AndrewsZ-infocom04}) that such exponential queue sizes can
occur.
Although computing the optimal solution of Max-Weight is computationally {\coll NP hard for many scenarios}, in many practical wireless networks $\ep$-approximate
solutions can be computed in polynomial time \cite{Baker94, GJSS08, GJSS09}.
In this paper, we also prove the stability of {\coll any} $\ep$-approximate Max-Weight
under the adversarial model, when $\ep>0$ is small enough. We conclude the paper with a set of
simulation results showing stability of Max-Weight on adversarial setups.

\subsection{Discussion}
We now give a high-level description of the Max-Weight algorithm and
discuss why the standard stochastic analyses are invalid in the
adversarial case. Essentially the protocol operates by maintaining
at each node $v$ a queue of data for each possible
destination $d$. We denote the size of this queue at time $t$ by
$q_{v,d}^t$. For any set of edges in the network, the
{\coll total} {\em weight} on the set at time $t$ is
the sum over all edges {\coll in the set} of the queue differentials multiplied
by the instantaneous edge rates. (A formal definition will be given in
the model section below.)
At all times the \maxwt protocol transmits data on edges so as to
maximize the total weight that it gains. In many situations computing
the exact Max-Weight set of transmissions is a computationally
hard problem. However, in Section 4 we show the stability of an approximate Max-Weight algorithm which can be
implemented efficiently {\coll in many practical setups}.

We say that we are in the stationary stochastic model if there is an
underlying stationary Markov Chain whose state determines the channel
conditions on the edges. We say that we are in the adversarial model
if we do not make such assumptions.
{\coll
In order to make sure that the network is not inherently overloaded the adversarial model assumes that there {\em exists} some way to route and schedule the packets so as to keep the network stable.  However, these routes and schedules are {\em a priori} unknown to the algorithm
}

{\blue
Most previous analyses of Max-Weight have been
performed in a stationary stochastic model} and
they take the following form. Define a quadratic potential
function $P(t)=\sum_{v,d}(q^t_{v,d})^2$ and show, using the assumption that the
traffic arrivals are within the network stability region, that the
potential function always has a negative drift up to an additive
second order term {\coll of $P(t+1)-P(t)=\sum_{v,d}(q^t_{v,d}+\sigma^t_{v,d})^2-\sum_{v,d}(q^t_{v,d})^2$} where $\sigma^t_{v,d}=q^{t+1}_{v,d}-q^t_{v,d}$. Moreover, when the potential function become
sufficiently large, the negative drift in the first order term is
sufficient to overcome the positive second order term. Therefore the
entire potential function has a negative drift. This determines an
upper bound on $P(t)=\sum_{v,d}(q^t_{v,d})^2$ and hence we have an
upper bound on $\sum_{v,d}q^t_{v,d}$.

The reason that this type of analysis does not apply in the
adversarial model is that the channel rates associated with the large
queues in the network may be very small (or even zero).  In this case
we cannot necessarily say that a large queue implies a large drop in the
potential.
Hence for any possible queue configuration there is always
the possibility that the potential function
$P(t)=\sum_{v,d}(q^t_{v,d})^2$ can increase. Hence we need a different
approach to ensure stability.
{\coll
We discuss this in more detail in Section 2.
}

\subsection{Why do adversarial models make sense}
We now briefly discuss why {\coll it is useful} to consider
the adversarial setting which includes the {\em worst} case
scenario;
A model that is governed by a
stationary stochastic process is not general enough to cover many
widely occuring scenarios.
For example, consider a cellular network
in which a car is driving down a road between evenly spaced
basestations.  In this case the channel conditions between the car and
its closest basestation will rise and fall in a periodic
fashion. Moreover, when a car drives into an area of poor coverage
(e.g.\ a tunnel), the channel rate could go to zero. In particular,
this could happen in a haphazard manner that is not modeled by a
stationary stochastic process.

The situation is even more severe in ad-hoc
networks. As nodes move around many of the edges $(i,j)$ will only be active for
a finite amount of time. Hence any stationary stochastic model that gives a
non-zero channel rate to such an edge cannot accurately reflect the
edge rate over a long time period. However, we still wish to ensure
that the queue sizes will not blow up unnecessarily over time and we
believe that an adversarial analysis is one way to address this type
of question.

{\coll In \cite{AndrewsJS-stoc07}, the stability of
Max-Weight in some adversarial model was proven.} However, it was not sufficiently rich to capture
many types of wireless interactions. First of all, in the model of \cite{AndrewsJS-stoc07} all edge rates were either zero or one.
Secondly, when a edge had rate one we could transmit on it regardless of what is happening on the other edges.  However, this model
cannot capture a situation in which edge rates are variable, nor can it capture a scenario with two interfering edges such that
we can transmit on either one in isolation but not both simulataneously.

In this paper we will define a more general adversarial
model in which {\coll any interference conditions are possible and edge rates can vary over time}.
This allows us to capture arbitrary types of wireless interference behavior. In the next section we describe our model in more detail, after which
we present our results.

\subsection{The Model}
We assume a system in which time is divided into {\coll discrete} time slots.
We consider a queueing model for packet transmissions.
Let $D$ be the set of possible destinations.
Each destination in $D$ can be a subset of the set of nodes.
At each time step $t$
a set of feasible edge rate vectors $R(t) \subset \mathbb{R}^k$ is given by the adversary where $k=|E|$, and
$E$ is the set of all directed edges. Suppose that $r(t)\in R(t)$.
It means that if we
write $r(t) = (r_1(t), r_2(t), \ldots, r_k(t)),$ then
it is possible to transmit on edge $e$ at rate $r_e(t)$, {\em for all
edges simultaneously}. In other words we can transmit data of size
$x_1$ on edge $1$, data of size $x_2$ on edge $2$, etc.,\ so long as
$0\le x_e\le r_e(t)$ for all $e$. Note that this means that the rates
satisfy the {\em downward closed} property, i.e.,\ we can always
transmit on an edge at a rate that is less than the rate
{\coll
$r_e(t)$.

{\blue This is a very general setting for the interference model because
it includes all the possible interference constraints,}
including k-hop interference, independent set constraint, and node exclusive constraints.
For example for a dynamic network $G(V,E(t))$, $R(t)=\{(r_e(t))_{e\in E(t)}|r_{e}(t)=0$ or $1$ for all $e\in E(t)$, $r_{e_1}(t)r_{e_2}(t)=0$ if $e_1$ and $e_2$ are incident in $E(t)\}$ represents a set of feasible edge rate vectors of independent set constraints on $E(t)$ that changes over time.

We make the following assumption about the adversary.
(It was shown in \cite{AndrewsZ-focs02} that if we do not have
these conditions then no online protocol can be stable.)
\begin{itemize}
\item All {\coll packet arrival and edge rates} are bounded from above and non-zero rates are bounded away
from zero. In other words, there exist values $R_{\min}>0$ and
$R_{\max}>0$ such that for each $r(t)=(r_1(t),\ldots,r_k(t))\in R(t)$,
$r_e(t)\le R_{\max}$ and if $r_e(t)\neq 0$ then $r_e(t)\ge
R_{\min}$.
\end{itemize}

We now define the $(\omega,\ep)$-adversary.
At each time, it determines the packet arrivals and edge capacities.
Then, the routing and scheduling algorithm decides the packet transfers in the network against the $(\omega,\ep)$-adversary.
{\blue In this manner, our
framework can be understood as a type of sequential game.}

\begin{definition}\label{def:one}
We say that an adversary injecting the packets and controlling the
edges is an $(\omega,\ep)$-adversary, \A, for some $\ep> 0$ and some integer $\omega\ge 1$, called a {\em window} parameter,
if the following holds:
{\coll
The adversary defines the feasible rate vectors and packet arrivals in each time step subject to the constraint that there exists a routing and scheduling algorithm $T$ (possibly involving fractional movement of packets) which keeps the system stable.
Let $t_p$ be the time when a packet $p$ is injected.
Then we can define $\Psi_p=\{(e,t')|t'\in[t_p,t_p+\om-1], \ell(p,e,t')>0,$ where $\ell(p,e,t')$ is a fractional amount of $p$ that is transmitted by $T$ along $e$ at time $t'\}$,
{\blue which corresponds to the movement of
packet $p$ from its source to one of its destinations under the
algorithm $T$.
} For all packet $p$,
$(1-\frac{\ep}{2})$ fraction
\footnote[1]{
In fact, for any $(1-\delta)$ fraction of $p$ with constant $0<\delta<\ep$ all the results in this paper holds.}
of $p$ will arrive to its
destination during the
window $[t_p,t_p+\omega-1]$.}
{\coll For any integer $j$, let
$I^j$ be the set of packets injected during the
window $W_j=[j\om,(j+1)\om-1]$.
Then the adversary assumes that the following holds
$$
\sum_{p\in I^{j}\cup I^{j-1}, (e,t')\in\Psi_p, t'\in W_j} \ell(p,e,t')\le
\sum_{t'\in W_j} (1-\ep)r_e(t'),
$$
where
$r(t')\in R(t')$ are edge rate vectors assigned by $T$.
}
\end{definition}

This is a very general adversarial model because it covers all the possible interference conditions, {\red including k-hop interference, independent set constraint, and node exclusive constraints,} in dynamic networks, and this model includes adversarial models used in \cite{AnshelevichKK02}, \cite{AielloKOR98} and \cite{AndrewsJS-stoc07}.
{\blue We prove the following theorem, which shows that the \maxwt protocol is {\em throughput-optimal} even against the strongest adversary.}

\begin{theorem}\label{cor:one}
The \maxwt protocol is stable under any \A $~$for any $\ep>0$.
\end{theorem}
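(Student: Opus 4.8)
The plan is to replace the quadratic potential $P(t)=\sum_{v,d}(q^t_{v,d})^2$ by a comparison between the Max-Weight schedule and the (unknown) stabilizing schedule $T$ guaranteed by the adversary, tracked over blocks of $\omega$ time slots rather than slot-by-slot. Fix a window $W_j=[j\omega,(j+1)\omega-1]$. The key quantity is the total weight $w^t$ that Max-Weight gains at time $t$, namely $\sum_e (q^t_{\text{tail}(e),d}-q^t_{\text{head}(e),d})\, x_e^t$ maximized over feasible $r(t)\in R(t)$ and $0\le x_e^t\le r_e(t)$; since Max-Weight maximizes this at every slot, for \emph{any} competing choice of rates and flows — in particular those dictated by $T$ on the packets of $I^j\cup I^{j-1}$ — the weight collected by Max-Weight is at least as large. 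The first step is to write this inequality, summed over $t\in W_j$, and use the adversary's load condition $\sum \ell(p,e,t')\le \sum_{t'\in W_j}(1-\ep)r_e(t')$ to conclude that the ``service weight'' Max-Weight extracts in $W_j$ dominates, up to a $(1-\ep)$ factor, the weight that a fully-stabilizing schedule would need, where the relevant queue differentials are those present at the start of $W_j$.

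The second step converts this weight comparison into a drift statement on $\sum_{v,d} q^t_{v,d}$ (the $L^1$ norm, \emph{not} the square). The crucial structural observation is the telescoping/flow-conservation identity: the net change in $\sum_{v,d} q_{v,d}$ over $W_j$ equals (total injected) minus (total delivered to destinations), and delivered mass can be charged against the weight gained along source-to-destination paths. Because the adversary promises that $(1-\ep/2)$ of every packet reaches its destination within its own window under $T$, and because over two consecutive windows the Max-Weight weight is at least $(1-\ep)$ times the weight $T$ uses, one obtains that whenever the aggregate queue mass (restricted to the queues ``visible'' to $T$'s routes of the current packets) is large, Max-Weight must be delivering packets out of the system at a rate strictly exceeding the injection rate — i.e.\ a genuine negative drift of the $L^1$ potential on the coarse (window) timescale, with an additive slack of order $\omega R_{\max} k$ absorbing boundary packets from $I^{j-1}$ and the $\ep/2$ residual.

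The third step is the bookkeeping that makes this honest: one must show that the queue mass $T$ ``does not see'' (packets $T$ has already delivered but that Max-Weight is holding in faraway queues) cannot hide indefinitely. Here one argues inductively over windows: any packet injected in $W_{j-1}$ is, under $T$, delivered by the end of $W_j$, so the differentials $q^{j\omega}_{v,d}$ appearing in the weight comparison for $W_j$ already capture all the ``old'' mass except an amount bounded by what $T$ routes in $W_{j-1}$, which the load condition bounds by $(1-\ep)\sum_{t'\in W_{j-1}} r_e(t')\le (1-\ep)\omega R_{\max}$ per edge. Chaining the per-window negative-drift inequalities then yields a uniform bound on $\sum_{v,d} q^t_{v,d}$ of the claimed form (exponential in the network size, because the flow-conservation charging can lose a factor per hop and there are up to $k$ hops, and because queue differentials along a routing path can telescope only after repeated substitution).

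The main obstacle I anticipate is exactly this last step: controlling the mismatch between the queue \emph{differentials} that drive the Max-Weight objective and the absolute queue \emph{sizes} one wants to bound, when channel rates on the large queues may be zero. Unlike the stochastic proof, a large $q_{v,d}$ need not produce any drift locally; one only gets drift by following $T$'s route for the packets currently in the window and arguing that \emph{somewhere} along that route a usable (rate $\ge R_{\min}$) edge with a large differential exists, which forces the weight — and hence the service — to be large. Making ``somewhere along the route'' quantitative is what introduces the per-hop loss and hence the exponential bound; getting the induction to close despite this loss, using only the two-window load condition and the $\omega$-window delivery guarantee, is the technical heart of the argument. (The $\ep$-approximate case then follows by the same argument with $1-\ep$ replaced by $(1-\ep)^2$ or similar, which is why one needs $\ep$ small enough.)
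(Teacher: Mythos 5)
Your plan is genuinely different from the paper's: you propose an $L^1$ potential $\sum_{v,d} q^t_{v,d}$ with a window-by-window induction, whereas the paper works with the quadratic potential $\sum_{v,d}(q^t_{v,d})^{\beta+1}$ and a much more intricate induction \emph{over the number of queues}, partitioning queues into ``tall'' and ``small'' at carefully chosen thresholds $M_k$ and treating certain injections to the small-queue subsystem as ``bad packets'' whose count is bounded recursively. The similarity is only at the surface level of ``compare Max-Weight's weight to $T$'s weight using the load condition.''

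The central step of your proposal does not close. In step 2 you assert that because Max-Weight collects at least as much weight per slot as $T$ would, and because $T$ delivers $(1-\ep/2)$ of each packet within its window, Max-Weight ``must be delivering packets out of the system at a rate strictly exceeding the injection rate'' whenever the $L^1$ mass is large. But the quantity Max-Weight maximizes, $\sum_e s_e(t)\bigl(q^t_{\mathrm{tail}(e),d}-q^t_{\mathrm{head}(e),d}\bigr)$, is the first-order decrease of the \emph{quadratic} potential, not a lower bound on the $L^1$ decrease. Max-Weight can collect enormous weight by shuttling packets between two adjacent interior queues whose differential is huge while delivering nothing; the $L^1$ norm then does not move. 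The ``charge delivered mass against weight along source-to-destination paths'' idea relies on telescoping queue differentials along a path, but Max-Weight's transfers do not form paths (they need not even be acyclic on a given slot), and the telescoping identity that would equate weight with delivered mass simply is not available. The weight comparison to $T$ gives you $L^2$ information, full stop; converting that to $L^1$ drift is exactly the leap you have not justified.

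Step 3 has a second, independent gap. You want to argue that mass ``$T$ does not see'' cannot hide indefinitely, via induction over windows. But the adversary is allowed to set every edge adjacent to some node $v$ to rate zero for all future time; the backlog Max-Weight has already accumulated at $v$ is then permanently frozen, $T$'s routes for future injections never touch it, and nothing in your per-window inequality ever forces that frozen mass to appear in the weight comparison again. Your induction over time would need to have bounded $q_v$ \emph{before} the rates went to zero, which is precisely what you are trying to prove. The paper escapes this circularity by inducting over the \emph{number of queues}: once a queue is permanently tall, the remaining queues form a smaller system whose stability the inductive hypothesis already guarantees, and the contribution of the tall-queue interactions is controlled by counting ``bad injections'' (injections to the small subsystem whose assigned transmissions spill over into the tall subsystem) and proving, via Lemma~\ref{lemma:three}, that this count is finite within the relevant interval. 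There is no analogue of this accounting in your window chaining, and I do not see how to supply one without essentially reconstructing the paper's tall/small partition and the $U(n,q_0,b)$ recursion.

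One smaller inaccuracy: the paper's per-packet statement (Theorem~\ref{lemma:zero}) is stronger than, and necessary beyond, your aggregate per-window weight comparison. The paper associates a specific vector of partial transmissions $\Gamma_p$ with \emph{each} packet $p$ and bounds the potential change attributable to $p$ individually by $-\frac{\ep}{1-\ep/2}\ell_p(\beta+1)q^\beta+\ell_p O(q^{\beta-1})$; this per-packet attribution is exactly what lets the notion of a ``bad packet'' be well-defined in Definition~\ref{def:ab}, and the whole inductive machinery of Theorem~\ref{thm:two} rests on it. An aggregate weight inequality over a window does not produce this attribution.
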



\subsection{The Protocol}
\label{s:protocol} We now define the \maxwt protocol.
We assume that each node $v$ has $|D|$ queues which correspond to each destination, respectively. Thus, we have $n|D|$ many queues. Let $Q_{v,d}$ be the queue at node $v$ for data
having destination $d$. Let $q^t_{v,d}$ be the total size of data in
queue $Q_{v,d}$ at time $t$.
We define a general routing and scheduling algorithm \maxwtbt
that is parameterized by a parameter $\beta>0$. We use \maxwt to
denote the algorithm with $\beta=1$. In this paper, we will use the term {\em scheduling algorithm} to mean a combined routing and scheduling algorithm.

\vspace{.0in}
\noindent{\textbf \Al ~ \maxwtbt}\ \vspace{.05in} \hrule
\begin{enumerate}
\item Choose $r(t)\in R(t)$ and $d^{(e)}\in D$ for each $e=(v,u)\in E$, such that
$ \sum_{e\in E} s_e(t) \left((q^t_{v,d^{(e)}})^{\beta}
-(q^t_{u,d^{(e)}})^{\beta}\right)$
is maximized (with an arbitrary tiebreaking rule) where
$$s_e(t):=min\left\{r_e(t),\left|\frac{q^t_{v,d^{(e)}}-q^t_{u,d^{(e)}}}{2}\right|\right\}.$$
Send data of size $s_e(t)$
from $Q_{v,d^{(e)}}$ to $Q_{u,d^{(e)}}$ along $e$.

\item For each time $t$, and for each node $v$, accept all packets injected by the \ad $~$to $v$.

\item Remove all packets that arrive at their destination.
\vspace{.05in} \hrule \vspace{.05in}
\end{enumerate}

When $\beta>0$, $(q^t_{v,d^{(e)}})^{\beta}-(q^t_{u,d^{(e)}})^{\beta} \ge 0$ implies $q^t_{v,d^{(e)}}-q^t_{u,d^{(e)}} \ge 0$, so it guarantees all packet movement between
queues occur from a taller queue to a smaller queue.

The algorithm can be understood to be designed so that the following
{\em potential function} decreases as much as possible. (However, as
discussed earlier and unlike in the stochastic case, there is no
simple argument that for sufficiently large queue sizes there always
{\em is} a decrease in potential.)
$$P(t)~\stackrel{\triangle}{=} \sum_{v,d} (q^t_{v,d})^{\beta+1}.$$

\section{Stochastic analysis}
In this section we give more details of the typical stochastic
analysis and explain why this type of analysis does not directly hold
in the adversarial setting. {\blue We say that we are in the
stationary stochastic model if there is an underlying stationary
Markov Chain $\cal M$ with state space $\{m_r\}$ and a function $f(\cdot)$
from $\{m_r\}$ to sets of feasible edge rate vectors $R(t)$ such that
the Markov Chain updates its state at each time step and if it
has state $\{m_r\}$ at time $t$ then $R(t) = f(m_r)$.}

Throughout this section we will focus on the case that $\beta=1$ and
study the potential function $P(t)=\sum_{v,d}(q^t_{v,d})^2$.  Let
$a^t_{v,d}$ (resp.\ $b^t_{v,d}$) be the amount of data arriving into
(resp.\ departing from) $Q_{v,d}$ at time $t$, according to the \maxwt
algorithm.
{\coll
For simplicity we shall also discuss the most basic scenario in which the distribution over feasible service rate vectors is i.i.d.\ at each time step.}
Let ${a'}^t_{v,d}$ and ${b'}^t_{v,d}$ be the corresponding
quantities for the underlying ``optimum'' schedule (that keeps the
system stable by assumption). The expected change in $P(\cdot)$ from time step $t$ to time $t+1$ is given by,
\begin{eqnarray}
&&E[P(t+1) - P(t)]
= E[\sum_{v,d}(q^{t+1}_{v,d})^2 - \sum_{v,d}(q^t_{v,d})^2]
\nonumber \\
&=& E[\sum_{v,d}\left(q^t_{v,d} + a^t_{v,d} - b^t_{v,d})\right)^2 - \sum_{v,d}(q^t_{v,d})^2]
\nonumber\\
&=& E[\sum_{v,d}((q^t_{v,d})^2 + {\coll 2}q^t_{v,d}(a^t_{v,d} - b^t_{v,d})+
(a^t_{v,d} - b^t_{v,d})^2) - \sum_{v,d}(q^t_{v,d})^2]\nonumber\\
&\le & E[\sum_{v,d} ({\coll 2}q^t_{v,d}({a'}^t_{v,d} - {b'}^t_{v,d})+ (a^t_{v,d} - b^t_{v,d})^2)].
\end{eqnarray}
The final inequality is due to the definition of \maxwt since we can
think of \maxwt as always making the decision that minimizes
$\sum_{v,d} q^t_{v,d}({a'}^t_{v,d} - {b'}^t_{v,d})$.
{\coll
By taking into account the i.i.d. nature of the service rate vectors and the fact that the traffic injections can be scheduled by the optimal algorithm,
we have that $E[({a'}^t_{v,d} - {b'}^t_{v,d})]\le -\epsilon$ for all $v,d$.}
since there is an upper bound on the amount of data that can be transfered
between two queues at each time step, $E[(a^t_{v,d} - b^t_{v,d})^2]$ is
bounded by some quantity $C$ that is independent of time.  Hence,
$$
E[P(t+1) - P(t)]
\le  C-{\coll 2}\sum_{v,d} q^t_{v,d}\ve
$$
and thus if there is some $Q_{v,d}$ that satisfies $q^t_{v,d}\ge
C/{\coll 2}\ve$ then the expected drift of $P(t)$ is negative at time $t$. This
in turn implies that $P(t)$ cannot grow indefinitely over time and so
the system is stable.

We can now demonstrate why this type of argument {\em does not} hold
in the adversarial model. In a non-stationary, adversarial environment it is
not necessarily the case that the set $R(t)$ and  packet arrival
rates are independent of the
$q^t_{v,d}$ values. {\coll That is, we cannot assume that a large queue will have good connectivity to the rest of work}, so
there is no analogue of the statement that
$E[({a'}^t_{v,d} - {b'}^t_{v,d})]\le {\coll -}\ve$. In particular, it may be the case
that for all large $q^t_{v,d}$ and for all $r(t)\in R(t)$, the value
of $r_e(t)$ is zero for all edges $e$ that are adjacent to node $v$.
{\coll
Indeed, the fact that we have built up a large queue in one region of the network may be precisely because that region has poor connectivity to other parts of the network.}
Hence we need a different type of argument to show stability in the
adversarial setting and this is the question that we address in this
paper.

\section{Main Results}
At the highest level, our proof proceeds as follows. We first show a result that bears some similarity to the ``negative drift'' result that is used to prove stability in stationary stochastic systems.  In particular in Theorem~\ref{lemma:zero} we show that whenever a packet is injected, we can assign a set of transmissions by the \maxwtbt protocol to the packet such that the resulting decrease in potential almost matches the increase in potential that arises from the packet injection itself. This allows us to bound the increase in potential whenever a packet is injected. (We note as an aside that when there are no packet injections the \maxwtbt protocol ensures that the potential never increases.) Moreover, Theorem~\ref{lemma:zero} also shows that whenever there is an injection to a queue that is sufficiently tall, the assigned transmissions induce a decrease in potential {\em more than} the increase due to the packet injection. Hence for such injections there will always be a decrease in potential.

However, in an adversarial system this type of argument is not sufficient to show stability since it might be the case that most packets are injected into small queues.  We therefore extend the proof of Theorem~\ref{lemma:zero} to a more general result that will ensure stability.  In particular we introduce the notion of a {\em bad injection}.  This is an injection that is extra to the injections that are allowed by our definition of adversary.  This notion is convenient since we will use an inductive proof in which injections to small queues that lead to a big increase in potential are treated as ``extra'' packets by the inductive hypothesis.   In particular, we are able to use an inductive argument to show that the number of bad injections is bounded, and hence we can obtain an upper bound of the potential over all time. This immediately implies the stability of {\scshape Max-Weight($\beta$)}.

We now describe these ideas in a little more detail.
The procedure in our setup is as follows. At each time, an adversary chooses
the packet injections and interference conditions. Then
\maxwtbt determines the (routing and) scheduling of packet transmissions.
To show the stability of {\scshape Max-Weight($\beta$)}, we will define an assignment of each packet with a set of (partial) transmissions in the network, so that any injected packet to a {\em tall} queue will decrease the potential function.

\begin{definition}\label{def:zero}
{\coll We imagine that there are $|D|$ links on each directed edge corresponding to each possible destination respectively.}
Let $L=\{ \ell=(e,d)|e=(v,u)\in E, d\in D\}$ be the set of all links.
Let $p$ be a packet injected at time $t$, and let $W=[t,t+\omega-1]$.
A set of partial transmissions $\Gamma_{p}$ assigned to $p$ is defined as a vector of dimension $\omega|L|$.
Let $s_e(t')$ be the vector chosen by \maxwtbt that maximizes $\sum_{e} s_e(t') \left((q^{t'}_{v,d^{(e)}})^{\beta}-(q^{t'}_{u,d^{(e)}})^{\beta}\right).$
For a given adversary $A(\om,\ep)$, and a scheduling algorithm $Alg$, let $\Gamma_{p}($\A$,Alg)=(s_{p,\ell}(t'))_{\ell\in L, t'\in W}$ be a vector of size $\omega|L|$ that satisfies for each $e\in E$, $d\in D$, $t'\in W$, (i) $s_{p,(e,d)}(t')\ge0,$ and (ii) $\sum_{p,d}s_{p,(e,d)}(t')\le s_e(t').$
We say $\Gamma_{p}($\A$,Alg)$ is a {\em set of (possibly partial)
transmissions assigned with $p$}, for convenience, denote by $\Gamma_{p}$.
\end{definition}

We note that the word {\em partial} is used to reflect the fact that one transmission may correspond to multiple packets $p$ subject to the
condition (ii).
{\red Conceptually, it allows the case that an injected packet can be transmitted to its destination across multiple paths.
Thus, an assignment of partial transmissions $\Gamma_p$ of each packet can represent many general routing patterns.
Moreover, it allows the case when $\Gamma_p$ does not form a set of paths.}
An example of this assignment is shown in Fig 1.
\begin{figure}
\centering
\epsfig{file=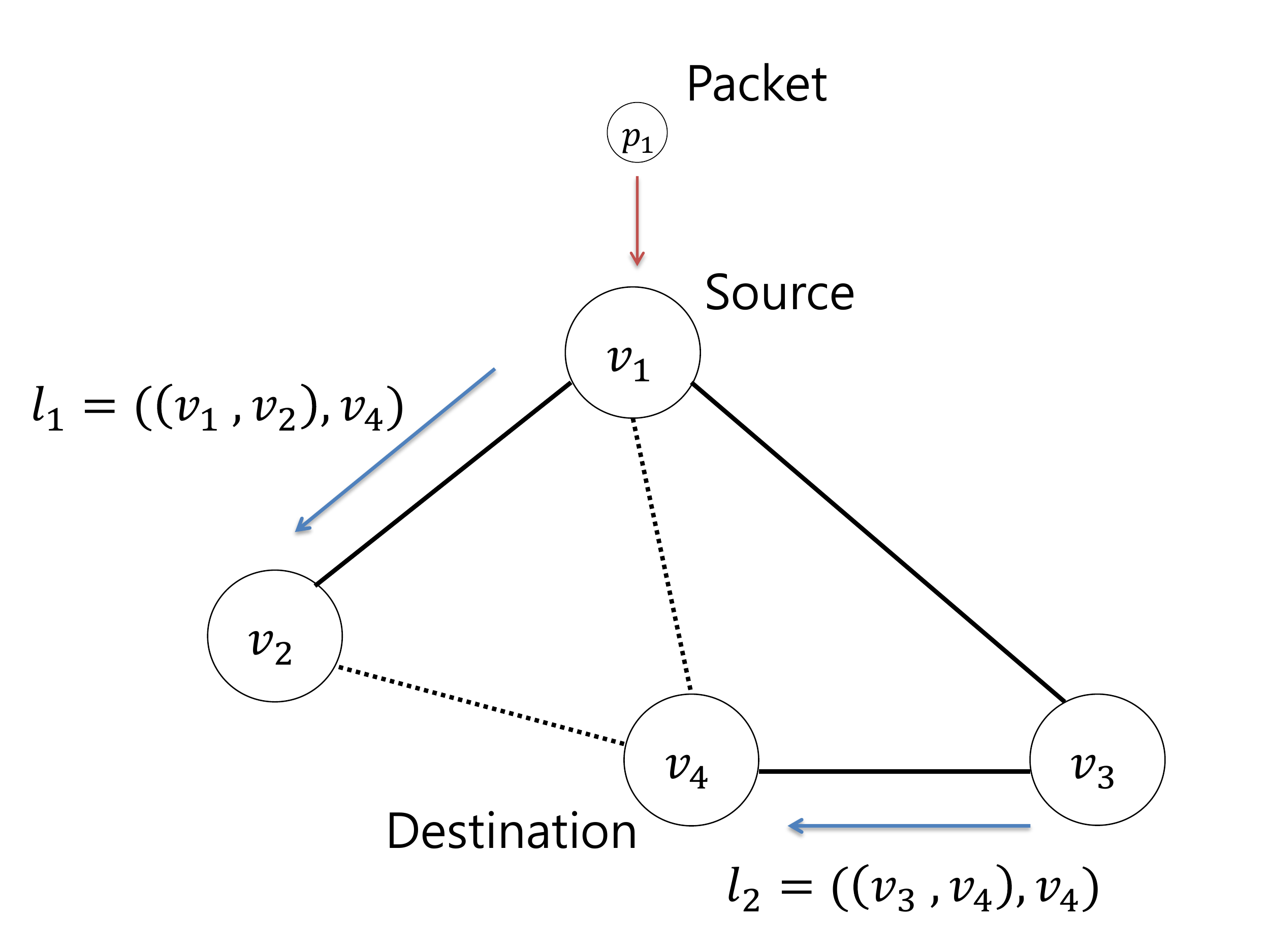, height=1.7in, width=2.2in}
\caption{An example of partial transmission assignment. {\red Suppose that a packet $p_1$ is injected to a node $v_1$ at time $t_{p_1}$, and its destination is $v_4$.
For instance, a set of partial transmissions assigned with $p_1$ that contains
$s_{p_1,\ell_1}(t_{p_1}+2)=0.5$ and $s_{p_1,\ell_2}(t_{p_1}-1)=1$ can be $\Gamma_p$.
Note that the assignments do not need to be at the same time,
and the whole assignments do not need to form a path, or multiple paths.
}
}
\end{figure}
\begin{theorem}\label{lemma:zero}
{\coll
Consider a given adversary \A$~$ for any $\omega\ge 1$ and $\ep>0$,
and the \maxwtbt protocol for some $\beta>0$.
For any injected packet $p$,
we can assign} this packet with
$\Gamma_p=\Gamma_p( A(\omega,\epsilon),~${\scshape Max-Weight($\beta$)}$)$
so that the sum of total potential changes
is less than ${\coll -\frac{\ep}{1-\ep/2}}\ell_p(\beta
+1)q^\beta +\ell_p O(q^{\beta-1}),$ where $q$ is the height of the
queue where the packet is injected. Therefore, there is a constant
$q^*$ depending on $\omega$ and $\ep$, so that if $q\ge q^*$
the sum of potential changes due to the injection is less than $-\frac{\ep}{2}\ell_p q^\beta$.
\end{theorem}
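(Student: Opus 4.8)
The plan is to charge to each injected packet $p$ a bundle of transmissions $\Gamma_p$ that \maxwtbt actually performs during the window $W=[t_p,t_p+\omega-1]$, and to show that the potential lost across this bundle exceeds the potential gained at the injection of $p$ by a $\Theta(\ep)$ multiple of the leading term $\ell_p(\beta+1)q^{\beta}$. The three ingredients are: (a) the elementary convexity/Taylor arithmetic of $P(t)=\sum_{v,d}(q^t_{v,d})^{\beta+1}$, (b) the per-step optimality built into \maxwtbt, and (c) the existence, stability and slackness of the reference schedule $T$ promised by Definition~\ref{def:one}.

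First I would reduce the claim to a purely combinatorial ``weighted work'' inequality. Since all arrivals and rates are $O(1)$, a Taylor expansion of $x\mapsto x^{\beta+1}$ shows that injecting $p$ into a queue whose height becomes $q$ raises $P$ by at most $\ell_p(\beta+1)q^{\beta}+\ell_p O(q^{\beta-1})$, while a transfer of amount $x\le R_{\max}$ across a link $(e,d')$ with $e=(a,b)$, performed at a moment when $0\le x\le (q_{a,d'}-q_{b,d'})/2$ --- which, thanks to the $\min$ with the differential cap, is precisely the regime in which \maxwtbt operates, and which also forces the transfer to truly decrease $P$ --- lowers $P$ by $x(\beta+1)\,((q_{a,d'})^{\beta}-(q_{b,d'})^{\beta})$ up to a second-order term absorbed into the $O(\cdot)$ error. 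Hence, modulo error terms, it suffices to exhibit a feasible $\Gamma_p=(s_{p,(e,d)}(t'))$ obeying (i)--(ii) of Definition~\ref{def:zero} whose total weighted work $\sum_{t'\in W}\sum_{(e,d)\in L}s_{p,(e,d)}(t')\,((q^{t'}_{v,d})^{\beta}-(q^{t'}_{u,d})^{\beta})$ (with $e=(v,u)$) is at least $\tfrac{1}{1-\ep}\,\ell_p\,q^{\beta}$. This is enough because $\tfrac{1}{1-\ep}\ge\tfrac{1+\ep/2}{1-\ep/2}$ for every $0<\ep<1$, so $\tfrac{1}{1-\ep}\ell_p(\beta+1)q^{\beta}$ minus the injection cost leaves exactly the margin $\tfrac{\ep}{1-\ep/2}\ell_p(\beta+1)q^{\beta}$ asserted.

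To produce such a bundle I would start from $T$. It routes $p$ out of its source (height $q$ in the \maxwtbt run) and a $(1-\ep/2)$ fraction of it into its destination, all within $W$, while over each aligned window loading every edge to at most a $(1-\ep)$ fraction of the capacity $T$ itself deploys there. Decomposing $\Psi_p$ into fractional source-to-destination paths $v=v_0\to v_1\to\cdots$ and telescoping the differentials $(q^{t'}_{v_i,d})^{\beta}-(q^{t'}_{v_{i+1},d})^{\beta}$ evaluated along the \emph{\maxwtbt} queue heights --- using that any queue height drifts by only $O(\omega)=O(1)$ over a length-$\omega$ window, so arrival and departure heights at an interior node agree up to $O(\omega q^{\beta-1})$ and the sum collapses, the destination contributing height $0$ --- shows that a hypothetical assignment mirroring $T$'s transport of $p$ would already have weighted work $\ell_p q^{\beta}+\ell_p O(q^{\beta-1})$ (the $\ep/2$ of $p$ that does not reach the destination is nonetheless driven away from the tall source and still contributes near-$q^{\beta}$ per unit). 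Now \maxwtbt does not route like $T$, but at each step it selects the rate vector in $R(t')$ and the per-edge destinations that \emph{maximize exactly this weighted work}; hence over the window it harvests at least as much weighted work as any per-step-feasible schedule, in particular at least that of a $\tfrac{1}{1-\ep}$-rescaling of the per-step-feasible projection of $T$'s transport of $p$, which the $(1-\ep)$ capacity slack keeps feasible. This puts $\tfrac{1}{1-\ep}\ell_p q^{\beta}$ of weighted work inside \maxwtbt's realized transfers. The last step is to parcel it out as an explicit $\Gamma_p$ obeying condition (ii): because the windows of different packets overlap, this is a global fractional assignment of \maxwtbt's realized transfers to packets, which I would build by flow decomposition together with a Hall-type feasibility check, so that the pieces handed to distinct packets never exceed what \maxwtbt actually moved.

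The final sentence is then immediate: choosing $q^{*}=q^{*}(\omega,\ep)$ large enough that every $\ell_p O(q^{\beta-1})$ term is below $\tfrac{\ep}{4}\ell_p q^{\beta}$ for $q\ge q^{*}$ makes the net change at most $-\tfrac{\ep}{2}\ell_p q^{\beta}$. I expect the real difficulty to sit in the step ``from $T$ to \maxwtbt'' together with the extraction of $\Gamma_p$: \maxwtbt pushes greedily down the potential gradient rather than toward destinations, its differential cap can bind, and --- unlike in the stationary analysis of Section~2 --- its queue heights can be astronomically larger than those of the stable schedule $T$, so one must argue carefully that the per-step greedy harvest really dominates the $T$-based reference quantity edge-by-edge, and that this harvest can be apportioned among the individual packets without any transmission being charged twice. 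Pinning the bookkeeping constants so the margin comes out to exactly $-\tfrac{\ep}{1-\ep/2}\ell_p(\beta+1)q^{\beta}$ is part of the same effort.
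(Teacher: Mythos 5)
Your plan mirrors the paper's proof closely: carve a fractional budget $d_{p,e}(t')$ out of the reference schedule $T$'s rates using the $(1-\ep)$ slack in Definition~\ref{def:one}, telescope the queue-differential weights along $\Psi_p$ (with the $O(\omega)$ drift of queue heights over the window absorbed into the $O(q^{\beta-1})$ error) to produce the $\ell_p q^{\beta}$ leading term, invoke \maxwtbt's per-step maximality to dominate that budget, and then parcel the realized \maxwtbt transmissions out into the $\Gamma_p$'s subject to conditions (i)--(ii). The paper performs this last step via an explicit recursive $\min$-formula using the auxiliary quantities $J^{(\cdot)}$ and $K^{(\cdot)}_{e_j}$ in (\ref{eq:g})--(\ref{eq:g4}); that is the concrete instantiation of your ``flow decomposition plus Hall-type check,'' and you correctly flag it as the technical core that still needs to be carried out.

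One quantitative point to fix: you target weighted work at least $\tfrac{1}{1-\ep}\ell_p q^{\beta}$ and justify the extra factor by crediting the $\ep/2$ mass of $p$ that $T$ fails to deliver, claiming it ``still contributes near-$q^{\beta}$ per unit.'' That is not safe --- in the worst case that fraction never leaves the source under $T$ and contributes nothing to the telescoping sum. Only the delivered $(1-\ep/2)\ell_p$ telescopes to the destination (height $0$), which after the $\tfrac{1}{1-\ep}$ rescaling gives $\tfrac{1-\ep/2}{1-\ep}\ell_p q^{\beta}\ge \tfrac{1}{1-\ep/2}\ell_p q^{\beta}$; this is the factor the appendix actually derives, and it yields a net margin of $\tfrac{\ep/2}{1-\ep/2}\ell_p(\beta+1)q^{\beta}$ rather than the $\tfrac{\ep}{1-\ep/2}$ displayed in the theorem statement (an apparent inconsistency in the paper itself). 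The final claim about $q^{*}$ and the $-\tfrac{\ep}{2}\ell_p q^{\beta}$ drop survives either way, but your leading constant should be derived from the delivered fraction, not from crediting the undelivered remainder.
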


In the next section we will prove the stability of the \maxwt protocol under any $A(\om,\ep)$. The same argument can be applied to prove that the \maxwtbt
protocol with any constant $\beta > 0$ is stable under any \A $~$with $\ep>0$.

Now, we will prove the stability of the \maxwt protocol under any $A(\om,\ep)$. The same argument can be applied to prove that the \maxwtbt
protocol with any constant $\beta > 0$ is stable under any \A $~$with $\ep>0$.

We define a more general adversarial model, which we call a {\em general adversarial queue system} with {\em bad packets}.
%
In this model the number of queues can be any finite number, not only of the form $n|D|$.
%
An adversary allowing $b$ many bad packets is defined as follows.

{\col
By Theorem~\ref{lemma:zero}, for any \A $~$under {\scshape Max-Weight}, for each injection
of packet $p$, we can assign this packet with a set of partial
transmissions $\Gamma_p$ so that the sum of potential changes due
to these movements are at most ${\coll -\frac{\ep}{1-\ep/2}}\ell_p q+C,$
where $q$ is the height of the queue where the packet is injected
and $C$ is a constant depending on $\om$ and $\ep$
(but not on $n$ and $t$).
In a general adversarial queueing system, we also consider the same assignment $\Gamma_p$.
If the sum of potential changes due to $\Gamma_p$
is at least ${\coll -\frac{\ep}{1-\ep/2}}\ell_p q+C+1$, we now say this a
{\em bad packet}.
We say all the other injected packets are {\em good packets}.

\begin{definition}\label{def:ab}
We say that an adversary injecting the packets and controlling edge capacities in a general adversarial queue system is an \Ab $~$for some
$\ep> 0$ and some integers $\omega\ge 1$ and $b\ge 0$, if the
following holds: there exists a scheduling algorithm $Alg$ and an assignment of
partial transmissions for each injected packet $p$
(for example, the collection of $\Gamma_p$ for \maxwt protocol),
such that among all the packets injected over all time, there are at most $b$ bad packets.
\end{definition}

In the proof of \maxwt stability, we will use an induction on the number of queues.
For a given subset of queues, we can imagine a smaller (sub-)system of those queues.
For an injected packet $p$, if {\em too much} of the assigned partial transmissions do not occur between the queues of the sub-system, we will consider $p$ as a bad packet.
In the analysis, we will use the following property of good packets. 

\begin{lemma}\label{lemma:zero2}
Consider a general adversarial queue system 
\Ab$~$ with a corresponding scheduling algorithm $Alg$ and a
corresponding set of partial transmissions of packets $\Gamma$.
Then there is a constant $q^*$ depending on $\omega$ and $\ep$,
so that for any good packet $p$ injected to a queue of height $q$, if $q\ge q^*$ the sum of the {\em decrease of potential}
due $p$ is more than $\frac{\ep}{2} \ell_p q$.
\end{lemma}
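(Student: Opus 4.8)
The plan is to read the conclusion essentially straight off the definition of a good packet, the only genuine ingredient being that the coefficient $\frac{\ep}{1-\ep/2}$ occurring there is strictly larger than the target coefficient $\frac{\ep}{2}$. First I would unwind the definition given just before Definition~\ref{def:ab}: the adversary \Ab comes equipped with a scheduling algorithm $Alg$ and an assignment $\Gamma$ of partial transmissions, and a packet $p$ injected into a queue of height $q$ is \emph{bad} exactly when the potential bookkeeping charged to $p$ along the run of $Alg$ (the increase from injecting $p$, together with the decreases contributed by the transmissions in $\Gamma_p$) is \emph{not} strictly below $-\frac{\ep}{1-\ep/2}\ell_p q + C + 1$, where $C$ is the constant furnished by Theorem~\ref{lemma:zero} and depends only on $\om$ and $\ep$ (and the fixed rate bounds $R_{\min},R_{\max}$), not on $n$ or on the time $t$. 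Hence, writing $\Delta_p$ for the decrease of potential attributed to a \emph{good} packet $p$, we have the strict inequality $\Delta_p > \frac{\ep}{1-\ep/2}\ell_p q - C - 1$, and the whole task reduces to showing that this lower bound is at least $\frac{\ep}{2}\ell_p q$ once $q$ is large, uniformly over all good packets $p$.

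For this I would isolate the slack coefficient $\gamma := \frac{\ep}{1-\ep/2} - \frac{\ep}{2} = \frac{\ep(2+\ep)}{2(2-\ep)}$ and note that $\gamma>0$: we may assume $0<\ep<2$ (otherwise the requirement in Definition~\ref{def:one} that a $(1-\ep/2)$-fraction of each packet be delivered is vacuous), so $1-\ep/2\in(0,1)$ and both numerator and denominator of $\gamma$ are positive. Rewriting $\frac{\ep}{1-\ep/2}\ell_p q - C - 1$ as $\frac{\ep}{2}\ell_p q + \gamma\,\ell_p q - (C+1)$, the claim becomes the elementary inequality $\gamma\,\ell_p q \ge C+1$. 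Now I invoke the standing assumption on the adversary that non-zero packet arrivals are bounded away from zero, so $\ell_p \ge R_{\min}>0$; then $\gamma\,\ell_p q \ge \gamma R_{\min} q$, which is at least $C+1$ as soon as $q \ge q^{*} := (C+1)/(\gamma R_{\min})$. Since $C$, $\gamma$ and $R_{\min}$ are all independent of $n$ and $t$, so is $q^{*}$, and for every good packet $p$ injected into a queue of height $q \ge q^{*}$ we get $\Delta_p > \frac{\ep}{1-\ep/2}\ell_p q - C - 1 \ge \frac{\ep}{2}\ell_p q$, which is exactly the lemma.

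I do not expect a substantive obstacle here; the lemma is a bookkeeping restatement of the good-packet definition, and the one working ingredient is the strictly positive slack $\gamma$, which lets the term linear in $q$ absorb the fixed additive constant $C+1$. The point that needs care is the \emph{uniformity} of $q^{*}$ — it must not secretly depend on $n$, on $t$, or on which good packet we look at — and this is exactly why it matters both that $C$ is the genuine constant from Theorem~\ref{lemma:zero} and that packet sizes are bounded below by $R_{\min}$. If one preferred to allow arbitrarily small packets, one would instead start from the sharper form of Theorem~\ref{lemma:zero}, in which the additive error scales as $\ell_p\,O(q^{\beta-1})$ so that the bad-packet threshold reads $-\frac{\ep}{1-\ep/2}\ell_p q + \ell_p C'$; then the common factor $\ell_p$ divides out of $\gamma\,\ell_p q \ge \ell_p C'$ and one obtains $q^{*} = C'/\gamma$ with no reference to $R_{\min}$.
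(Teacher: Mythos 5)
Your proof is correct and is essentially the paper's own argument: both unwind the definition of a good packet, isolate the slack coefficient $\gamma = \frac{\ep}{1-\ep/2}-\frac{\ep}{2}=\frac{2\ep+\ep^2}{4-2\ep}$, and choose $q^*$ so that $\gamma\,\ell_p q$ absorbs the additive constant $C+1$. The one place where you are slightly more careful than the paper is the uniformity of $q^*$: the paper writes $q^*=\frac{4-2\ep}{(2\ep+\ep^2)\,\ell_p}(C+1)$, which as written still depends on the packet through $\ell_p$, whereas you correctly invoke $\ell_p\ge R_{\min}$ (a standing assumption the paper states later, in the proof of Lemma~\ref{lemma:two2}) to obtain a threshold depending only on $\om$, $\ep$, and the fixed rate bounds, as the lemma statement requires.
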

}

\begin{proof}
From the definition, the sum of potential changes due to the injection
of any good packet $p$ is at most $-\frac{\ep}{1-\ep/2}\ell_p q+C+1$.
Let $q^*=\frac{4-2\ep}{(2\ep+{\ep}^2)(\ell_p)}(C+1)$,
then for any $q\ge q^*$,
$\frac{\ep}{1-\ep/2}\ell_p q -\frac{\ep}{2}\ell_p q$
$= \frac{2\ep+{\ep}^2}{4-2\ep}\ell_p q$
$\ge \frac{2\ep+{\ep}^2}{4-2\ep}\ell_p q^*=C+1$.
Thus, the decrease of potential is more than $\frac{\ep}{2}\ell_p q$ for $q\ge q^*$.
\end{proof}

{\col
The crux of our analysis will involve proving the following results (in section 4.2).

\begin{theorem}\label{thm:two}
Consider any general adversarial queue system \Ab $~$for any constant $\ep>0$ with corresponding scheduling algorithm $Alg$.
If $Alg$ {\col guarantees all packet movement between
queues occurs from a taller queue to a smaller queue},
then $Alg$ is stable.
\end{theorem}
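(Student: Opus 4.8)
The plan is to establish by induction on the number of queues $N$ the following quantitative refinement: for every $b\ge 0$ there is a bound $B(N,b)$, depending also on $\om$, $\ep$, $R_{\min}$, $R_{\max}$ but not on time, such that in any $N$-queue $A(\om,\ep,b)$ system run under a taller-to-smaller algorithm $Alg$ one has $\sum_v q^t_v\le B(N,b)$ for all $t$ (equivalently $P(t)=\sum_v(q^t_v)^{\beta+1}\le N\,B(N,b)^{\beta+1}$). The base case of a single queue is immediate: the adversary's stabilizability assumption forces that queue to be drained, and with at most $b$ bad packets and the uniform bound $R_{\max}$ on packet sizes and edge rates, this keeps it bounded.

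For the inductive step I would fix the queue $v$ whose boundedness I wish to prove and pass to the \emph{sub-system} $\cS$ consisting of the other $N-1$ queues: its injections are the real injections into $\cS$-queues together with everything $Alg$ transmits from $v$ into $\cS$; its departures are the real departures together with everything $Alg$ transmits from $\cS$ back to $v$; transmissions internal to $\cS$ and $Alg$'s decisions on them are inherited unchanged, so $\cS$ is still run by a taller-to-smaller algorithm. Each packet $p$ injected into $\cS$ is assigned the restriction of its route $\Gamma_p$ to links internal to $\cS$ (a sub-injection created by a transmission out of $v$ inherits the tail of the route of the original packet that owns that mass). If this restricted route no longer meets the potential-decrease threshold that defines a good packet in Definition~\ref{def:ab}, we reclassify $p$ as a bad packet of $\cS$. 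One then checks that $\cS$ is itself a general adversarial queue system $A(\om,\ep,b')$ for a finite $b'=b'(b,\om,\ep,N)$, so the induction hypothesis yields $\sum_{u\in\cS}q^t_u\le B(N-1,b')$ for all $t$.

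It then remains to bound $q_v$, and here Lemma~\ref{lemma:zero2} does the work: once $q^t_v$ exceeds $B(N-1,b')$, the queue $v$ is strictly taller than every queue of $\cS$, so nothing can enter $v$ except by direct injection, and at that point every \emph{good} injection into $v$ produces a net potential decrease of at least $\frac{\ep}{2}\ell_p q^t_v$; since taller-to-smaller transmissions never increase the potential, the number of bad injections into $v$ over all time is bounded and the potential supplied by injections while $v$ stays below $B(N-1,b')$ is bounded, so tracking $P(\cdot)$ forces $\max_v q^t_v$ --- and hence $\sum_v q^t_v$ --- below a bound, which closes the induction (and taking $b=0$ with the assignment of Theorem~\ref{lemma:zero} recovers Theorem~\ref{cor:one}). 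The step I expect to be the main obstacle is establishing the finiteness of $b'$ and its explicit dependence on $b,\om,\ep,N$ --- i.e.\ showing that only boundedly many packets can have ``too much'' of their assigned route escape $\cS$ through $v$. This is exactly where the taller-to-smaller hypothesis is essential: a route segment can leave $\cS$ only along a transmission incident to $v$, which therefore happens when $v$ is the taller endpoint, so a large number of such ``significant'' escapes would, via Lemma~\ref{lemma:zero2} applied at $v$, drive the global potential below zero, which is impossible. Turning this into an honest count, with the reclassification thresholds chosen consistently with the good/bad dichotomy of Definition~\ref{def:ab}, is the technical core of the proof.
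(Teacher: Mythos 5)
Your proposal shares the broad strategy of the paper (induction on the number of queues, a notion of bad packets for a subsystem, and Lemma~\ref{lemma:zero2} as the engine that converts injections into potential decreases), but the decomposition is genuinely different and, as written, has a gap that the paper's decomposition is specifically designed to avoid.

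You fix a single queue $v$ once and for all and take $\cS$ to be the other $N-1$ queues. The paper instead waits for a ``crisis time'' $t_0$ at which $P(t_0)\ge nM_1^2$, sorts the queues at that instant, and finds a gap index $k$ with $q_k\ge M_k$ and $q_{k+1}\le M_{k+1}$; the ``tall'' queues $Q_1,\dots,Q_k$ and ``small'' queues $Q_{k+1},\dots,Q_n$ are then defined relative to this gap, which can be a different partition at each crisis. The nested cascade of thresholds $M_k\gg L_k\gg S_k\gg M_{k+1}$ guarantees that the tall queues are so much taller than the small ones that, thanks to the taller-to-smaller hypothesis, no mass can leak from small to tall until one of the tall queues is explicitly touched. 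This gives an honest finite time interval $[t_0,t^*]$ on which the small-queue subsystem is an $A(\om,\ep,0)$ or $A(\om,\ep,b')$ system with $b'$ bounded \emph{explicitly} in terms of the $L_j$'s (via Lemma~\ref{lemma:three}, which bounds total mass moved across small links when there is no tall-link transmission), and Lemma~\ref{lemma:two2} closes the interval by producing $t^*$ with $P(t^*)\le P(t_0)$. Because the analysis is confined to $[t_0,t^*]$ and the conclusion is that the potential returns to its $t_0$ value, there is no need to bound anything ``over all time'' before the induction is complete.

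Your version lacks that finite window and, as a consequence, the step you yourself flag as the obstacle --- establishing finiteness of $b'$ --- is circular as sketched. You propose to bound the number of significant escapes through $v$ by saying that each one forces a fixed potential decrease at $v$, so too many would drive the global potential negative. But the global potential also rises due to ordinary injections over all time, and the only way to cap those rises is to already know $\sup_t P(t)<\infty$, which is exactly what the theorem asserts. The paper's argument escapes this trap precisely because (i) it only needs the count of bad injections to be bounded on the interval $[t_0,t^*]$ between two visits to the level $nM_1^2$, where the potential of the tall queues is monotone non-increasing, and (ii) Lemma~\ref{lemma:three} supplies an unconditional, time-independent bound on the total mass that can cross small links while no tall link fires, by a telescoping/convexity argument on the potentials of runs of consecutive small queues. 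Without some analogue of that finite-window structure and of Lemma~\ref{lemma:three}, removing a single fixed $v$ does not yield a valid $A(\om,\ep,b')$ subsystem with finite $b'$, so the induction does not close. There is also a secondary issue: the fixed $\cS=\{\text{all queues}\}\setminus\{v\}$ can itself contain the full tall/small hierarchy, so even if $b'$ were finite, the induction hypothesis applied to $\cS$ would only give $U(N-1,\cdot,b')$, and you would then need a separate argument (not present in the sketch) to control $q_v$ when $v$ is \emph{not} the globally tallest queue --- a situation the paper's dynamic split avoids because the ``removed'' block is always the current set of tall queues, which by construction never receives from the subsystem.
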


Hence from Theorem~\ref{thm:two} we obtain Theorem~\ref{cor:one} directly.
}

\section{Proofs of Theorems}
\subsection{Proof of Theorem~\ref{lemma:zero}}

\begin{proof}
We divide time into windows of $\omega$ time steps, $[0,\omega-1],
[\omega,2\omega-1], [2\omega,3\omega-1], \ldots$.
Since $W_j=[j\omega,(j+1)\omega-1]$
for all integer $j\ge 0$, the collection of $W_j$ for $j\ge 0$
is non-overlapped and the union of this collection covers
all time slots $t$.
From now on let $W=W_j$ for some integer $j\ge 0$.

For each time $t'\in W$, and for each node $v$,
we accept all packets injected by the adversary.
For each packet $p\in {\coll I^j \cup I^{j-1}}$ we will associate some fraction $r_p=\{d_{p,e}(t')|(e,t')\in \Psi_p\}$ of
rates of directed edges used in $\Psi_p$ as follows.
Let $p_1,\ldots p_m$ be all the packets injected in ${\coll I^j \cup I^{j-1}}$. {\coll The order of $p_i$'s can be any possible ordering. Then from the definition 1,}
\begin{equation}\label{eq:b}
\sum_{\begin{subarray}{c} i=1,\ldots,m,
t'\in\{t'\in W|(e,t')\in \Psi_{p_i}\}
\end{subarray}} {\coll \ell(p_i,e,t')}\le \left(1-\ep\right)\sum_{t'\in
W}r_e^{(0)}(t').
\end{equation}
{\coll where $r^{(0)}(t)\in R(t)$ are edge rate vectors assigned by $T$.}

First, for $p_1$ and for each {\col directed edge} $e$ used in $\Psi_{p_1}$, we can define
$d_{p_1,e}(t')$ for each $t'\in W$ so that
\begin{equation}\label{eq:c}
0\le d_{p_1,e}(t')\le
r_e(t')~ ~ \mbox{and} ~ ~ (1-\ep)\sum_{t'\in W} d_{p_1,
e}(t')={\coll \sum_{t'\in W} \ell(p_1,e,t')}.
\end{equation}

{\col We define $d_{p_1,e}(t')=0$ for all directed edge $e$ which is not used in $\Psi_{p_1}$.}
Then, from (\ref{eq:b}) and (\ref{eq:c}), for each $e\in E$ and $t'\in W$, let
$r_e^{(1)}(t')=r_e(t')-d_{p_1,e}(t')$. Then we have,
$$
\sum_{i=2, (e,t')\in \Psi_{p_i}, t'\in W}^m {\coll \ell(p_i,e,t')}\le (1-\ep)\sum_{t'\in
W}r_e^{(1)}(t').
$$
Similarly for $p_2$ and for each $e$ used in $\Psi_{p_2}$ we can define
$d_{p_2,e}(t')$ for each $t'\in W$ so that $$0\le d_{p_2,e}(t')\le
r_e^{(1)}(t')~ ~ \mbox{and} ~ ~ (1-\ep)\sum_{t'\in W} d_{p_2,
e}(t')={\coll \sum_{t'\in W} \ell(p_2,e,t')}.$$

By continuing this process, we can define $d_{p_i,e}(t')$
{\col inductively for all $i\ge2$}, for each $e$ used in $\Psi_{p_i}$ and $t'\in W$ so that
\begin{equation}\label{eq:cc}
0\le d_{p_i,e}(t')\le
r^{(i-1)}_e(t')~ ~ \mbox{and} ~ ~ (1-\ep)\sum_{t'\in W} d_{p_i,
e}(t')={\coll \sum_{t' \in W} \ell(p_i,e,t')}.
\end{equation}

At time $t'$, think of a {\col directed edge} $e=(v,u)\in E$, a {\col link} $(e,d)$, and suppose that $e$ has rate
$r_e(t')$ at time $t'$, and $q^{t'}_{v,d}\ge q^{t'}_{u,d}+r_e(t')$. Then the
potential change $C_e(t')$ due to transmission via {\col a link} $(e,d)$ at time $t'$ is
\begin{eqnarray}\label{eq:d}
C_e(t')=&(q^{t'}_{u,d}+{\coll r_e(t')})^{\beta+1}-(q^{t'}_{u,d})^{\beta+1}
+(q^{t'}_{v,d}-{\coll r_e(t')})^{\beta+1}-(q^{t'}_{v,d})^{\beta+1}
\nonumber \\
=&r_e(t')(\beta+1)\left((q^{t'}_{u,d})^\beta-(q^{t'}_{v,d})^\beta\right)
+r_e(t') O\left((q^{t'}_{u,d})^{\beta-1}+(q^{t'}_{v,d})^{\beta-1}\right).
\end{eqnarray}

Note that this is also true when $|q^{t'}_{u,d}-q^{t'}_{v,d}|<r_e(t')$.
Hence, when $d_{p,e}(t')$ amount of edge rate of $e$ at time $t'$
is assigned to an injected packet $p$, we can consider
$d_{p,e}(t')(\beta+1)\left((q^{t'}_{u,d})^\beta-(q^{t'}_{v,d})^\beta\right)
+d_{p,e}(t')
O\left((q^{t'}_{u,d})^{\beta-1}+(q^{t'}_{v,d})^{\beta-1}\right)$
amount of potential change is induced by a packet $p$.

{\col
We consider the sum of potential changes at each time $t'$ by {\scshape Max-Weight($\beta$)}. Let $s_e(t')$ be a vector chosen by {\scshape Max-Weight($\beta$)}.
From (\ref{eq:d}),
\begin{equation}\label{eq:e}
C_e(t') \ge s_e(t')(\beta+1)\left((q^{t'}_{u,d^{(e)}})^\beta-(q^{t'}_{v,d^{(e)}})^\beta\right)
-R_{\max} O\left((q^{t'}_{u,d^{(e)}})^{\beta-1}+(q^{t'}_{v,d^{(e)}})^{\beta-1}\right).
\end{equation}

From (\ref{eq:e}), we obtain that
\begin{eqnarray}\label{eq:f}
\sum_{e\in E}C_e(t') \ge \sum_{e\in E} s_e(t')(\beta+1)\left((q^{t'}_{u,d^{(e)}})^\beta-(q^{t'}_{v,d^{(e)}})^\beta\right)
-R_{\max} O\left((q^{t'}_{u,d^{(e)}})^{\beta-1}+(q^{t'}_{v,d^{(e)}})^{\beta-1}\right).
\end{eqnarray}

Thus, if we fix the time $t'$, then the sum of potential changes at $t'$ by \maxwtbt is less than or equal to the sum of potential changes at $t'$ by $d_{p,e}(t')$.
We want to define $\Gamma_p$ so that the sum of potential changes by
$s_{p,(e,d)}(t')$ is equal to the sum of potential changes by $d_{p,e}(t')$.

Firstly, we fix $t'\in W$.
Let $p_1,\ldots,p_m$ be the packets injected in $I^W$.
Let $E=\{e_1,\ldots,e_k\}.$ The order of $e_i$'s can be any possible
ordering. For each $e_j\in E$, let
\begin{equation}\label{eq:g}
K_{e_j}(t')=\sum_{i=1}^{m} d_{p_i,(v_j,u_j)}(t')\left((q^{t'}_{v_j,d_i})^\beta-(q^{t'}_{u_j,d_i})^\beta\right)
\end{equation}
where $d_i$ is the destination of $p_i$.
Let
\begin{equation}\label{eq:g1}
J(t')=\sum_{j=1}^{k} s_{e_j}(t')\left((q^{t'}_{v_j,d^{(e_j)}})^{\beta}-(q^{t'}_{u_j,d^{(e_j)}})^{\beta}\right)
\end{equation}
where $e_j=(v_j,u_j)$.
At first, we define
\begin{equation}\label{eq:g2}
s_{p_1,(e_1,d_1)}(t')=min\left\{\frac{J(t')}{(q^{t'}_{v_1,d_1})^{\beta}-(q^{t'}_{u_1,d_1})^{\beta}},
s_{e_1}(t'),
\frac{K_{e_1}(t')}{(q^{t'}_{v_1,d_1})^\beta-(q^{t'}_{u_1,d_1})^\beta}\right\},
\end{equation}
if $s_{e_1}(t')> 0$, and $s_{p_1,(e_1,d_1)}(t')=0$ otherwise. Since $s_{e_1}(t')$ is chosen by {\scshape Max-Weight($\beta$)}, $s_{p_1,(e_1,d_1)}(t')\ge0$.
Next, let $s_{e_1}^{(1)}(t')=s_{e_1}(t')-s_{p_1,(e_1,d_1)}(t'),$
$J^{(1)}(t')=J(t')-s_{p_1,(e_1,d_1)}(t')\left((q^{t'}_{v_1,d_1})^{\beta}-(q^{t'}_{u_1,d_1})^{\beta}\right),$
and $K_{e_1}^{(1)}(t')=K_{e_1}(t')-s_{p_1,(e_1,d_1)}(t')\left((q^{t'}_{v_1,d_1})^{\beta}-(q^{t'}_{u_1,d_1})^{\beta}\right).$

Similary, for all $2\le i \le m$, we can define
\begin{equation}\label{eq:g3}
s_{p_i,(e_1,d_i)}(t')=min\left\{\frac{J^{(i-1)}(t')}{(q^{t'}_{v_1,d_i})^{\beta}-(q^{t'}_{u_1,d_i})^{\beta}},
s_{e_1}^{(i-1)}(t'), \frac{K_{e_1}^{(i-1)}(t')}{(q^{t'}_{v_1,d_i})^\beta-(q^{t'}_{u_1,d_i})^\beta}\right\},
\end{equation}
if $s_{e_1}^{(i-1)}(t')> 0$, and $s_{p_i,(e_1,d_i)}(t')=0$ otherwise.
Let $s_{e_1}^{(i)}(t')=s_{e_1}^{(i-1)}(t')-s_{p_i,(e_1,d_i)}(t'),$
$J^{(i)}(t')=J{\coll ^{(i-1)}}(t')-s_{p_i,(e_1,d_i)}(t')\left((q^{t'}_{v_1,d_i})^{\beta}-(q^{t'}_{u_1,d_i})^{\beta}\right),$
and $K_{e_1}^{(i)}(t')=K_{e_1}^{(i{\coll -1})}(t')-s_{p_i,(e_1,d_i)}(t')\left((q^{t'}_{v_1,d_i})^{\beta}-(q^{t'}_{u_1,d_i})^{\beta}\right).$

Now, from (\ref{eq:f}),
we can define inductively $s_{p_i,(e_j,d_i)}$ for all $j=2,\ldots,k,$
and $i=2,\ldots,m,$ so that
\begin{equation}\label{eq:g4}
s_{p_i,(e_j,d_i)}(t')=min\left\{\frac{J^{((j-1)m+(i-1))}(t')}{(q^{t'}_{v_j,d_i})^{\beta}-(q^{t'}_{u_j,d_i})^{\beta}},
s_{e_j}^{(i-1)}(t'), \frac{K_{e_j}^{(i-1)}(t')}{(q^{t'}_{v_j,d_i})^\beta-(q^{t'}_{u_j,d_i})^\beta}\right\},
\end{equation}
if $s_{e_j}^{(i-1)}(t')> 0$, and $s_{p_i,(e_j,d_i)}(t')=0$ otherwise,
where $e_j=(v_j,u_j),$ and $d_i$ is the destination of $p_i$.
}

{\col
Let $\Gamma_{p}=(s_{p_i,(e_j,d_i)}(t'))_{e_j\in E, t'\in W}$, where $d_i$ is the destination of $p_i$. for each $p_i \in I^W$.
We obtain that
$s_{p,(e,d)}(t')\ge0,$ $\sum_{p,d}s_{p,(e,d)}(t')\le s_e(t'),$
\begin{eqnarray}\label{eq:1111}
\sum_{j=1}^{k}\sum_{i=1}^{m} s_{p_i,(e_j,d_i)}(t')\left((q^{t'}_{v_j,d_i})^\beta-(q^{t'}_{u_j,d_i})^\beta\right)
&{\coll \le}&J(t')
\nonumber\\
&{\coll =}& \sum_{j=1}^{k}s_{e_j}(t')\left((q^{t'}_{v_j,d^{(e_j)}})^\beta-(q^{t'}_{u_j,d^{(e_j)}})^\beta\right),
\end{eqnarray}
and also the followings holds.
\begin{eqnarray}\label{eq:222}
\sum_{j=1}^{k} \sum_{i=1}^{m} s_{p_i,(e_j,d_i)}(t') \left((q^{t'}_{v_j,d_{\coll i}})^\beta-(q^{t'}_{u_j,d_{\coll i}})^\beta\right)
&{\coll =}& \sum_{j=1}^{k} K_{e_j}(t')
\nonumber\\
&=& \sum_{j=1}^{k} \sum_{i=1}^{m} d_{p_i,e_j}(t')\left((q^{t'}_{v_j,d_i})^\beta-(q^{t'}_{u_j,d_i})^\beta\right).
\end{eqnarray}

In the previous assignment, we first defined $s_{p_1,(e_j,d_1)}$ for $j=1,\ldots,k$.
From (\ref{eq:1111}) and (\ref{eq:222}), \maxwt algorithm guarantees that the following inequalities hold.
\begin{eqnarray}\label{eq:33}
\sum_{j=1}^{k} s_{p_i,(e_j,d_i)}(t') \left((q^{t'}_{v_j,d_{\coll i}})^\beta-(q^{t'}_{u_j,d_{\coll i}})^\beta\right)
&\le&\sum_{j=1}^{k} \sum_{i=1}^{m} s_{p_i,(e_j,d_i)}(t') \left((q^{t'}_{v_j,d_i})^\beta-(q^{t'}_{u_j,d_i})^\beta\right)
\nonumber\\
&=& \sum_{j=1}^{k} \sum_{i=1}^{m} d_{p_i,e_j}(t')\left((q^{t'}_{v_j,d_i})^\beta-(q^{t'}_{u_j,d_i})^\beta\right)
\nonumber\\
&\le&\sum_{j=1}^{k}s_{e_j}(t')\left((q^{t'}_{v_j,d^{(e_j)}})^\beta-(q^{t'}_{u_j,d^{(e_j)}})^\beta\right)
\end{eqnarray}

Thus, we can assign $s_{p_i,(e_1,d_i)(t')}$ for $i=1,\ldots,m,$ so that
$\sum_{i=1}^{m}s_{p_i,(e_1,d_i)(t')}\big((q^{t'}_{v_1,d_i})^\beta-(q^{t'}_{u_1,d_i})^\beta\big)=K_{e_1}(t').$
Similary, for all $j\ge2,$ we can assign $s_{p_i,(e_j,d_i)(t')}$ for $i=1,\ldots,m,$ so that
$\sum_{i=1}^{m}s_{p_i,(e_j,d_i)(t')}\big((q^{t'}_{v_j,d_i})^\beta $ $ -(q^{t'}_{u_j,d_i})^\beta\big)=K_{e_j}(t').$
{\col Then by taking the sum of the above inequallities we derive that}
strictly equallity holds in (\ref{eq:222}).
{\col So $\Gamma_p$ is well-defined by the $s_{p,(e,d)}$ values.}
Thus we assigned all packet $p \in I^W$ with $\Gamma_p$ so that
the assigned amount of partial packet transmissions in each link
at time $t'$ is less than or equal to the amount of packet
transmissions of \maxwtbt in each link at time $t'$.
}

The proof that for any \A$~$under \maxwtbt, for each injection
of a packet $p$,
the sum of potential changes due to the injection of $p$ and $\Gamma_p$ is at most ${\coll -\frac{\ep}{1-\ep/2}}\ell_p(\beta
+1)q^\beta +\ell_p O(q^{\beta-1}),$
where $q$ is the height of the queue where the packet is injected,
is in the Appendix A.
\end{proof}

\subsection{Proof of Theorem~\ref{thm:two}}
\begin{proof}
Let $\ep>0$ and let $\omega \ge 1$ be {\col some integer.
Consider a general adversarial queue system \Ab$~$ with scheduling algorithm $Alg$.
Let $n$ be the number of queues in this system.}
We will show that there is a constant
$U(n,q_0,b)$ such that for {\col \Ab}, when the size of the tallest queue
at time $t=0$ is at most $q_0$, the sizes of all queues over all $t\ge
0$ is bounded above by $U(n,q_0,b)$.

We {\col induct} on $n$ to show that for any $q_0\ge 0$ and
$b\ge 0$, there exists $U(n,q_0,b)$. {\col For the basic step,} when $n=1$, {\col there is only one queue in the system, and thus it should be a destination queue. Hence, $U(n,q_0,b)$ exists.}

{\col For the inductive step, we assume} that there is $U(m,q_0,b)$ for all $1\le m\le n-1$, and
for all $q_0\ge 0$ and $b\ge 0$. Using this induction hypothesis, we
will show that for any $q_0$, $U(n,q_0,0)$ exists.
We can set {\col
$U(n,q_0,1)=U(n,U(n,q_0,0)+R_{\max},0),$
because} at each time when the bad packet arrives,
the size of the tallest queue is at
most $U(n,q_0,0)$ {\col and we can transmit data of size at most
$R_{\max}$ on each link.} Similarly {\col for any $i\ge 1$,} we can set
{\col
\begin{equation} \label{eq:one}
U(n,q_0,i)=U(n,U(n,q_0,i-1)+R_{\max},0)\end{equation}
}
by considering the time when the
$i$th bad packet arrives.
Now we only need to prove that $U(n,q_0,0)$ exists.
Let $P(t)$ be the potential of the queues at time $t$.
{\col Note that each injection to a queue of size at most $q^*$
makes the potential increase by at most
$(2R_{\max}q^*+R_{\max}^2)$.
By Theorem~\ref{lemma:zero},} the maximum possible increase of potential
induced by all injections during any time window of size $\om$ is
bounded by some constant $P_0$. Now, for fixed $n$, we define the following; Let $M_{n}=0$. Given $M_{k+1}$, for $j=1,2,\ldots, k$,
define{\col
$$S_{j}~ \stackrel{\triangle}{=}U\left(n-k,M_{k+1},\frac{(j-1)}{2}(L_{1}+L_{2}\ldots+L_{j-1})^2\right),$$
$$L_{j}~ \stackrel{\triangle}{=}\frac{{\coll 2}(n-k) S_{j}^2}{\ep},$$
$$M_{k}~ \stackrel{\triangle}{=}\frac{L_{k}}{R_{\min}}+S_{k}+\frac{{\coll 2}P_0}{\ep R_{\min}}.$$}
Then $M_{k}$, $k=1,2,\ldots, n$, are decreasing over
$k$ {\col($M_1\gg M_2\gg \ldots \gg M_n=0$). We will} show that
for any \Ac $~$ for a general adversarial queue system with $n$
queues, for all time $t\ge 0$, $P(t)$ is bounded by some value that is
independent of $t$. More precisely we will show that{\col
\begin{equation}\label{eq:ineq}
P(t)\le(n-1)M_{1}^2+\max\{n q_0^2,
nM_{1}^2+2\sqrt{n}R_{\max}
M_{1}+R_{\max}^2\}.\end{equation}
Note that the right-hand side of (\ref{eq:ineq}) is independent of $t$, so we can conclude that $U(n,q_0,0)$ exists.}

Now suppose that we are given a general adversarial queue system
with $n$ queues controlled by an \Ac $~$ and {\col some given
scheduling algorithm $Alg$ and a corresponding set $\Gamma$ of partial transmissions assigned with packets }, such that all
the initial queue sizes are at most $q_0$.

Suppose that for all time $t$, $P(t)< nM_{1}^2$ {\col holds}.
Then it {\col implies} that the given scheduling algorithm {\col $Alg$ is stable and (\ref{eq:ineq}) is satisfied.}
Now suppose that there is $t_0$ such that $P(t_0)\ge nM_{1}^2$.
By choosing the smallest such $t_0$, we may assume that {\col
$P(t_0)\le \max\{n q_0^2, nM^2_{1}+2\sqrt{n}R_{\max}
M_{1}+R_{\max}^2\}$} since if
$P(t_0-1)<nM_{1}^2$, the change of potential between time $t_0-1$ and
$t_0$ is at most $2\sqrt{n}R_{\max}
M_{1}+R_{\max}^2$. Note that {\col if $P(t_0)\ge
nM_{1}^2$, then there is a queue of size at least $M_1$, so}
the size of tallest queue at that time is at least $M_1$.

Let $q_1\ge q_2 \ge\ldots \ge q_n=0$ be the ordered sizes of the queues
at time $t_0$. For $1\le j\le n$, let $Q_j$ be the corresponding
$j$th tallest queue at time $t_0$. Then since $q_1\ge M_{1}$ and
$q_n=M_n=0$, there exists some $1\le
k\le (n-1)$ such that $q_k\ge M_{k}$ and $q_{k+1}\le M_{k+1}$.
{\col Hence, $q_k\gg q_{k+1}$ and the sizes of the small queues
stay much smaller than $q_k$, and so the sizes of the tall queues are much
bigger than those of the small queues. We will show that} for all
the time afterward the size of the $(k+1)$th tallest queue stays much
smaller than $M_{k}$. A precise description will appear later.

Now fix one such $k$. We will {\col say} all the queues having size at
least $M_{k}$ at time $t_0$ are {\em``tall queues''}, and all the other
queues {\em ``small queues''}. {\col Recall that by our assumption on the \maxwt protocol, data from a small queue will never move to a tall queue.} Hence we can
consider the set of all the small queues as a separate general
adversarial queue system. We will call this queue system a {\em
system of small queues}. {\col Afterward,} we will use an inductive
argument on this system of small queues to guarantee that their sizes
are bounded by {\col a constant $S_j$ for some $1\le j\le k$ during
some period of time.}

\begin{figure}
\centering
\epsfig{file=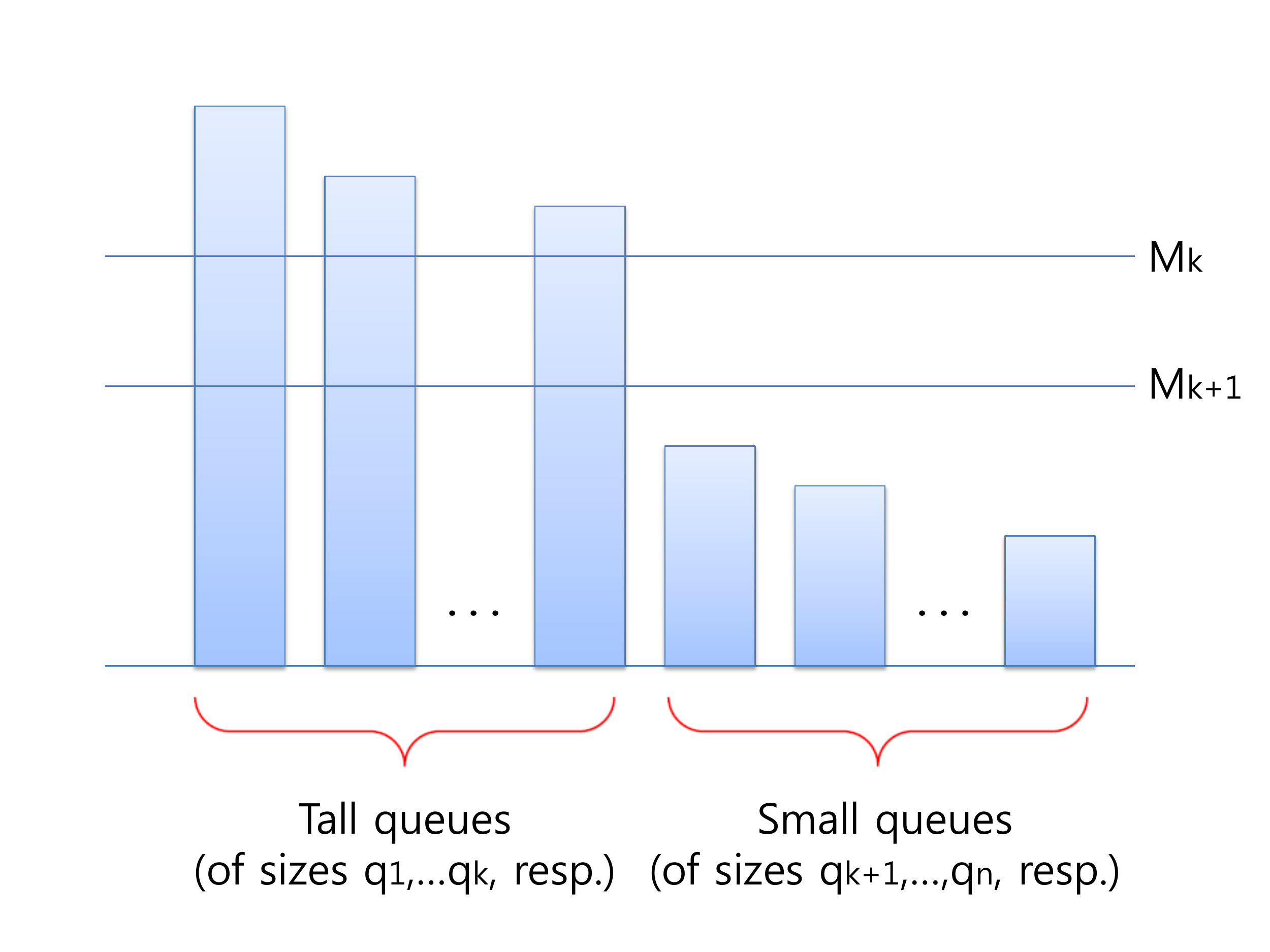, height=2.4in, width=3in}
\caption{All the queues having size at
least $M_{k}$ at time $t_0$ are called {\em``tall queues''}
and all the other queues are called {\em ``small queues''}.
Then, tall queues are much higher than small queues.}
\end{figure}

Let $t_1$ be the first
time after $t_0$ such that there is an injection of a packet to a
tall queue or a transmission of a packet from a tall queue to a
small queue.
Here we note that in the case when there is no such
$t_1>t_0$, then for this \Ac, the argument that will be presented in
the proof of Lemma \ref{lemma:two2} shows that the sizes of all the
small queues cannot be bigger than $M_{k}$ for any time $t\ge t_0$.
{\col Since a packet in a small queue will never move to a tall queue}
, the potential of tall queues are
non-increasing over all time. Hence we obtain that $P(t)$ is bounded
by {\col $(n-1)M^2_{k}+P(t_0)\le (n-1)M^2_{1}+\max\{n q_0^2,
nM^2_{1}+2\sqrt{n}R_{\max}M_{1}+R_{\max}^2\}$} for all $t\ge t_0$ as required in
(\ref{eq:ineq}).

When there is such a $t_1$, our main argument is that during time
$t_0\le t \le t_1$, {\col the system of small queues is maintained.}
{\col
By Lemma \ref{lemma:zero2}, we are able to show a net decrease in
the potential in the system, as long as there are ``sufficient''
injection into queues that are large enough.} Hence, one injection
to a tall queue or one transmission of a packet from a tall queue to
a small queue creates a sufficient decrease in potential.
{\col
We can therefore show that the potential remains bounded as long as
the increase in potential between times $t_0$ and $t_1$ is less than
the decrease in potential due to the injection or transmission at
time $t_1$.} We will prove the following Lemma.

\begin{lemma}\label{lemma:two2} There is $t^*$, satisfying $t_0<t^*\le t_1+\omega-1$, such that $P(t^*)\le P(t_0)$, and during $t_0\le t\le t^*$ the sizes of small queues are
bounded by $M_{k}$.
\end{lemma}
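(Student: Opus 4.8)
\medskip
\noindent\emph{Proof plan.}
The plan is to treat the small queues, over the interval $[t_0,t_1)$, as a self-contained general adversarial queue system on $n-k<n$ queues, bound their sizes by invoking the induction hypothesis, and then use the single ``tall event'' at time $t_1$ together with Lemma~\ref{lemma:zero2} to recover enough decrease of potential to drop back to $P(t_0)$. Because the decrease produced by an injection at $t_1$ is realized only over the window $[t_1,t_1+\omega-1]$ through the assigned transmissions, the best one can guarantee is $P(t^*)\le P(t_0)$ for some $t^*\le t_1+\omega-1$.

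First I would isolate the small subsystem on $[t_0,t_1)$. By the choice of $t_1$ there is no injection to a tall queue and no transmission from a tall queue to a small queue there, and since $Alg$ only moves data from a taller to a smaller queue, no data can flow from the small queues to the tall queues as long as every tall queue stays at least as high as every small queue; I would carry this, together with ``small queues $\le M_k$'', as a running invariant, under which the small queues evolve as a stand-alone system with initial heights at most $M_{k+1}$. I would then classify the packets injected into small queues on $[t_0,t_1)$: each such $p$ is a good packet of the full system carrying its assignment $\Gamma_p$, and restricting $\Gamma_p$ to the links internal to the small part discards only transmissions touching a tall queue; since in $[t_0,t_1)$ those are tall-to-tall moves and the tall part receives no inflow, their total volume is bounded by a function of the configuration at $t_0$, so only boundedly many small-queue packets become bad for the subsystem --- and this bad-packet count is precisely what the constants $S_j$ (with their budgets $\frac{(j-1)}{2}(L_1+\dots+L_{j-1})^2$, $1\le j\le k$) are arranged to absorb. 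Applying the induction hypothesis $U(n-k,M_{k+1},\cdot)$ then gives every small queue bounded by some $S_j\le M_k$, which closes the invariant and proves the ``small queues $\le M_k$'' assertion on $[t_0,t_1)$; after $t_1$ the extra inflow over $\omega-1$ steps is only a bounded additive amount, so the bound $M_k$ survives on all of $[t_0,t^*]$.

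Next I would compare the potential increase up to $t_1$ with the decrease forced at $t_1$. The tall-queue potential does not increase on $[t_0,t_1)$ and the small-queue potential is at most $(n-k)S_k^2$, which by the definitions $L_k=\tfrac{2(n-k)S_k^2}{\ep}$ and $M_k\ge\tfrac{L_k}{R_{\min}}+S_k+\tfrac{2P_0}{\ep R_{\min}}$ is at most $\tfrac{\ep}{2}R_{\min}M_k$; hence $P(t)\le P(t_0)+(n-k)S_k^2$ for all $t$ up to and including $t_1$. At $t_1$ the relevant event involves a queue of height still of order $M_k$ --- a tall queue receiving an injection, or the source of a tall-to-small transmission across a differential of order $M_k$ --- so by Lemma~\ref{lemma:zero2} (in the first case, realized over $[t_1,t_1+\omega-1]$ through $\Gamma_p$) or by the transmission itself (in the second) the potential drops by an amount of order $\ep M_k$; the $S_k$ and $\tfrac{2P_0}{\ep R_{\min}}$ summands in $M_k$ are exactly what make this drop exceed the earlier $(n-k)S_k^2$ together with the at most $O(P_0)$ of positive contribution a window's worth of small-queue injections can make, while every other motion of $Alg$ leaves the potential non-increasing. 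Therefore $P(t^*)\le P(t_0)$ for some $t^*\le t_1+\omega-1$, as claimed.

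The step I expect to be the main obstacle is the joint bookkeeping of bad packets and of the ``tall queues stay tall'' invariant: one must show that for a general taller-to-smaller scheduler a queue that was tall at $t_0$ cannot be drained below the thresholds used above by tall-to-tall transmissions before the first tall event, and at the same time that cutting $\Gamma_p$ down to the small subsystem reclassifies at most $\frac{(j-1)}{2}(L_1+\dots+L_{j-1})^2$ packets as bad, so that the constant handed to the induction hypothesis is the intended $S_j$. The remaining work is routine estimation of increases and decreases of the $(\beta+1)$-st power potential once these two facts are in place.
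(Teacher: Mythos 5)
Your high-level plan is the paper's plan: isolate the small queues as a separate general adversarial queue system, bound their sizes via the induction hypothesis and the bad-packet budget, and use Lemma~\ref{lemma:zero2} at the tall event $t_1$ to realize a decrease over $[t_1,t_1+\omega-1]$ that wipes out the accumulated small-queue potential. But the place you flag as ``the main obstacle'' is not a routine bookkeeping detail --- it is where essentially all of the proof's technical content lives, and you do not resolve it.

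Concretely, your argument hinges on the assertion that the total volume of tall-to-tall transmissions on $[t_0,t_1)$ is ``bounded by a function of the configuration at $t_0$,'' which you then feed into the bad-packet budget. This is false as stated: since the tall queues' relative order can change over time, a packet can be moved repeatedly among tall queues, and the raw transmission volume is not controlled by the initial masses. What the paper actually does is sort the neighboring differentials $r_1\le\cdots\le r_{k-1}$ of the tall queues at $t_0$ and split into cases (Cases I, II-A, II-B-1, II-B-2, II-C) depending on which differentials lie below the thresholds $L_j$. Transmissions across a ``tall link'' (large differential) each drop potential by more than the accumulated small-queue potential, so the first such transmission already yields the desired $t^*$; it is only for transmissions across the ``small links'' that one needs a volume bound, and that bound is supplied by Lemma~\ref{lemma:three}, which uses mass conservation across blocks of consecutive small links together with the taller-to-smaller rule and the quadratic potential to bound the total traffic by $\frac{m}{2}(L_1+\cdots+L_m)^2$. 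Your proposal never introduces the tall/small-link distinction, never states or proves anything like Lemma~\ref{lemma:three}, and also omits the bridge used in the paper --- that each bad injection to the small subsystem induces at least one unit of potential change among the tall queues --- which is what converts a volume bound into a bad-packet count. Without these pieces, ``only boundedly many small-queue packets become bad'' is unsupported, and the choice of which $S_j$ to hand to the induction hypothesis has no justification. You also need the small-queue bound to be $S_j+\omega n R_{\max}\le M_k$ rather than $S_j\le M_k$, since arrivals during the final window of length $\omega$ must be absorbed.

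So the route is the same, but the core lemma and case analysis that make it work are missing; the proposal reduces to an accurate sketch of the intended structure rather than a proof.
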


{\col The proof of Lemma \ref{lemma:two2} will appear later after we conclude the proof of Theorem~\ref{thm:two}.
By applying this,} the potential of all the small queues,
$P_{S}(t)~ \stackrel{\triangle}{=} \sum_{i=k+1}^n (q^{t}_{i})^2,$ is
bounded above by $(n-1)M_{k}^2 \le (n-1)M_{1}^2$ since the sizes of
all the small queues cannot be bigger than $M_{k}$ for any time
$t_0\le t\le t^*-1$. Note also that until the
time $t^*-1$, the potential of all the tall queues, $P_{T}(t)~
\stackrel{\triangle}{=} \sum_{i=1}^k (q^{t}_{i})^2,$ is
non-increasing over time.
Since at time $t^*$ we know that the total potential
$P(t^*)\le P(t_0)$, for $t_0\le t\le t^*$,
the potential $P(t)$  is bounded by {\col
$(n-1)M_{1}^2 + \max\{nq_0^2,nM^2_{1}+2\sqrt{n}R_{\max}M_{1}+R_{\max}^2\}$ }
We now choose the
first time $t\ge t^*$, if there exists such $t$, so that
$P(t)\ge nM_{1}^2$, and set this time as a new $t_0$.
Then by applying the same argument, we obtain that for all time
$t\ge 0$, (\ref{eq:ineq}) holds. Hence $U(n,q_0,0)$ exists.
{\col
It implies (\ref{eq:one}) which in turn proves Theorem~\ref{thm:two}.
}
\end{proof}

\begin{proof}(Proof of Lemma \ref{lemma:two2})
Note that, {\col for all time $t_0\le t\le t_1$}, there may be some
injection of packets to a small queue so that its corresponding {\col
set of partial transmissions}
includes some {\col link}s between tall queues {\col
that yields the amount of potential change at least 1.}
We will regard these
kinds of injected packets as ``bad packets'' for the system of small
queues, and we will call these injections  ``bad injections''.
{\col That is, each bad injection in the system of small queues
makes potential change among tall queues at least 1.}
Note that by considering these packets as bad packets, the dynamics of
small queues can be thought as an independent general adversarial
queue system having $n-k$  queues, {\col which means that it is a kind
of subsystem of the original system.} Then essentially, we will show
that the total {\col amount} of these bad injections over all time $t_0\le
t\le t_1$ is bounded by some number which is independent of $t$.
{\col Note that each bad injection in the system of small
queues makes potential change among tall queues at least 1.}

{\col
Now consider all possible cases to obtain the required $t^*$.
At first, we consider the case; (Case I) if there is no bad injection to
small queues for all time $t_0\le t\le t_1$, (Case II)
if there are some bad injections in that time window.

From the definition of the \maxwtbt algorithm,
$r_e(t) \ge  R_{\min}$ for each link $(e,d)$ such
that $q_{v,d}^t-q_{v,d}^t\ge 0$, so we send data along $e$ at least
$R_{\min}$ at once if we can. Without loss of generality, we can assume that
$R_{\min}$ and $R_{\max}$ satisfy $R_{\min}\le\ell_p\le R_{\max}$ for
each $p\in I^W$.
}

\vspace{0.1in}

$-$ {\textbf Case I$~~$} If there is no bad injection to
small queues for all time $t_0\le t\le t_1$,
then {\col by the induction hypothesis}, for all $t_0\le t\le t_1$,
the sizes of small queues are bounded above by
$S_{1}=U(n-k,M_{k+1},0)$. {\col Thus,} the
potential of all the small queues at time $t_1$ is at most ${\coll \frac{\ep}{2}}
L_{1}=(n-k) S_{1}^2$. {\col By Lemma \ref{lemma:zero2}, the decrease
of potential due to a injection to a tall queue is at least
${\coll \frac{\ep}{2}} R_{\min} M_k$, and the decrease of potential due to a transmission
from a tall queue to a small queue at time $t_1$ is at least
$\{(M_{k})^2-(S_{1})^2\}-\{(M_{k}-R_{\min})^2-(S_{1}+R_{\min})^2\}=2R_{\min}(M_k-S_1)$.
Thus, the decrease of potential due to a injection to a tall queue
or the decrease of potential due to a transmission
from a tall queue to a small queue at time $t_1$ is at least
$min\{{\coll \frac{\ep}{2}} R_{\min} M_k,2R_{\min}(M_k-S_1)\}\ge{\coll \frac{\ep}{2}} R_{\min}(M_k-S_1).$}
Note that from the definition of $M_{k}$,{\col
$${\coll \frac{\ep}{2}} R_{\min} (M_{k}- S_{1})\ge {\coll \frac{\ep}{2}} L_{1}+P_0.$$
Therefore,} the decrease of potential due to
an injection to a tall queue or a transmission from a tall queue
to a small queue at time $t_1$ is more than or equal to
the potential of all the small queues at time $t_1$,
{\col and the difference among them is at least $P_0$.
Note also} that the maximum possible increase of the potential induced
by injections during the time $[t_1,t_1+\omega-1]$ is bounded by $P_0$,
and that all the packet movement associated with the injection to a
tall queue at time $t_1$ occurs {\col in this time window of size $\om$.}
Since there was no injection to any of the tall queues during $t_0\le
t\le (t_1-1)$, the potential of the tall queues is non-increasing
for $t_0\le t< t_1$. Hence, by letting $t^*=t_1+\omega-1$, we
{\col have} $P(t^*)\le P(t_0)$.

\vspace{0.1in}

$-$ {\textbf Case II$~~$} {\col Suppose that} there are some bad
injections to small queues. Let $0\le r_1\le r_2\le \ldots \le r_{k-1}$ be the
{\col ordered} list of $(q_1-q_2)$, $(q_2-q_3),\ldots,(q_{k-1}-q_k)$.
{\col As $\{M_1, M_2, \cdots \}$ is a set of queue thresholds,
$\{L_1, L_2, \cdots \}$ defines a set of thresholds for the above list
of queue differences and $\{S_1, S_2, \cdots \}$ gives a bound
on the sizes of the small queues during some period of time
in the following cases. Note that these numbers are independent of $t$.
We can divide (Case II) by following three cases;
(Case II-A): if $r_1>L_{1}$, (Case II-B): if there is  $1\le
m< k-1$ such that for all $1\le j\le m$, $r_j\le L_{j}$, and
$r_{m+1}>L_{m+1}$,
(Case II-C): if $r_m \le L_m$ for all $1\le m\le k-1$.
}

\vspace{0.1in}

$-$ {\textbf Case II-A$~~$} Suppose that $r_1>L_{1}$.
Then any transmission between two tall queues at some
time $t_0< t\le t_1$ will make the decrease of potential more than $
L_{1}$. Let $t^*$ be the smallest time $t^*> t_0$ so that there is a
transmission between two tall queues at time $t^*$. By the induction
hypothesis, for all time $t_0 \le t\le t^*$, the sizes of the small
queues are bounded by $S_{1}=U(n-k,M_{k+1},0)$, and
the potential of the small queues is bounded by ${\coll \frac{\ep}{2}} L_{1}=(n-k)
S_{1}^2$. Then from the same argument as the (Case I), $P(t^*)\le
P(t_0)$.

\vspace{0.1in}

$-$ {\textbf Case II-B$~~$} Suppose that there is  $1\le
m< k-1$ such that for all $1\le j\le m$, $r_j\le L_{j}$, and
$r_{m+1}>L_{m+1}$. We will show that the
potential of all the small queues is bounded by ${\coll \frac{\ep}{2}} L_{m+1}$. We
may assume that bad injections to small queues induce transmissions
just between neighboring tall queues.
Note also that the {\col amount} of bad injections to small queues
during some period of time is bounded by the total {\col amount} of
transmissions between tall queues during that period of time.

We say a {\col link} $e_j=(Q_j,Q_{j+1})$ between two neighboring tall
queues is a {\em tall link} if $q_j-q_{j+1}>L_{m+1}$ and a {\em
small link} otherwise.
{\col
We can divide (Case II-B) by following two cases;
(Case II-B-1) if there is no transmission via tall links for all time $t_0\le t\le t_1$,
(Case II-B-2) if there is a transmission via some tall link for some time $t_0< t\le t_1$.
We will use the following Lemma.
}

\begin{lemma}\label{lemma:three}
Let $r_1,r_2,\ldots,r_m$ be the sizes of the small {\col link}s at time
$t_0$ and assume that $r_j\le L_j$ for all $1\le j\le m$. If there
is no transmission via {\em tall {\col link}s} for $t_0\le t< t'$ and all
the transmissions occur via small {\col link}s, then the total {\col amount} of
packet transmissions via {\em small {\col link}s} during that period of
time is bounded by
{\col $$\frac{m}{2}(r_1+r_2+\ldots+r_m)^2\le \frac{m}{2}(L_1+L_2\ldots+L_m)^2.$$}
\end{lemma}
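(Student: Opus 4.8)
The plan is to run a quadratic-potential argument on the tall queues, using the hypothesis that the tall links are idle. First I would sort the tall queues as $q_1\ge q_2\ge\cdots\ge q_k$ (their sizes at time $t_0$) and cut the chain $Q_1,Q_2,\ldots,Q_k$ at every tall link; this partitions the $k$ tall queues into \emph{segments}, where a segment $\sigma$ is a maximal block of consecutive tall queues joined only by small links. Since $m$ of the $k-1$ consecutive links are small and $k-1-m$ are tall, there are $k-m$ segments, and the numbers $|\sigma|-1$ of small links inside the segments satisfy $\sum_\sigma(|\sigma|-1)=m$. The structural point is that over $[t_0,t')$ each segment is a \emph{closed} subsystem: no data crosses a tall link by hypothesis, all transmissions go along small links, and (by the way the window $[t_0,t')$ is chosen in the proof of Theorem~\ref{thm:two}) no packet is injected into a tall queue during this window. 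Hence for each $\sigma$ the content $N_\sigma:=\sum_{i\in\sigma}q^t_i$ is constant on $[t_0,t')$, and every transmission inside $\sigma$ moves data from a taller to a shorter queue.

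Next I would track, for one segment $\sigma$, the restricted potential $P_\sigma(t)=\sum_{i\in\sigma}(q^t_i)^2$, using two elementary inequalities. First, for \emph{any} state of $\sigma$ with total mass $N_\sigma$, Cauchy--Schwarz gives $P_\sigma\ge N_\sigma^2/|\sigma|$. Second, let $D_\sigma$ be the spread of $\sigma$ at time $t_0$, i.e.\ the sum of the $|\sigma|-1$ small-link differences inside $\sigma$; since $\sigma$ is contiguous, every gap $q^{t_0}_i-q^{t_0}_j$ with $i,j\in\sigma$ is at most $D_\sigma$, so the identity $|\sigma|\,P_\sigma(t_0)-N_\sigma^2=\sum_{i<j\in\sigma}(q^{t_0}_i-q^{t_0}_j)^2\le\binom{|\sigma|}{2}D_\sigma^2$ gives $P_\sigma(t_0)\le N_\sigma^2/|\sigma|+\tfrac{|\sigma|-1}{2}D_\sigma^2$. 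Subtracting the lower bound, the net drop of $P_\sigma$ over $[t_0,t')$ is at most $\tfrac{|\sigma|-1}{2}D_\sigma^2$. Since every transmission inside $\sigma$ is taller-to-shorter, the tall-queue potential never increases while these transmissions take place (there being no injections into tall queues and no inflow from small queues), so this net drop is exactly the potential change that the small-link transmissions of $\sigma$ produce among the tall queues.

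Summing over segments, using $D_\sigma\le r_1+\cdots+r_m$ (the right-hand side being the sum of \emph{all} small-link differences at $t_0$) and $\sum_\sigma(|\sigma|-1)=m$, gives the bound $\tfrac12(r_1+\cdots+r_m)^2\sum_\sigma(|\sigma|-1)=\tfrac m2(r_1+\cdots+r_m)^2$ on the total amount of transmission carried on small links during $[t_0,t')$; the hypothesis $r_j\le L_j$ then yields $\tfrac m2(L_1+\cdots+L_m)^2$, as claimed. The step I expect to be the main obstacle is the identification used above between ``total amount of transmission on the small links'' and ``net decrease of the tall-queue potential'': this rests on the per-step fact (asserted in the paper for the whole network) that a batch of simultaneous taller-to-shorter \maxwt moves never \emph{increases} $\sum q^2$, which I would need applied to the sub-collection of tall queues so that the net drop of $\sum_\sigma P_\sigma$ accounts for \emph{all} the small-link activity with no cancellation from spurious increases, together with the bookkeeping that over $[t_0,t')$ the tall queues receive no injections and exchange no data with the small queues. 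With those in hand the estimate is immediate from the two inequalities for $P_\sigma$.
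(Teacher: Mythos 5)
Your proposal is correct and is essentially the same argument as the paper's proof. The paper also cuts the chain of tall queues at the tall links into maximal blocks of consecutive small links (your ``segments''), observes that mass is conserved within each block since no data crosses a tall link or arrives from a small queue or injection, lower-bounds the final potential by the equal-queue configuration $N_\sigma^2/|\sigma|$, and rewrites $|\sigma| P_\sigma(t_0)-N_\sigma^2=\sum_{i<j}(q_i^{t_0}-q_j^{t_0})^2$ to get the per-block bound $\tfrac{|\sigma|-1}{2}(\text{sum of block gaps})^2$ before summing $\sum_\sigma(|\sigma|-1)=m$; your Cauchy--Schwarz/variance phrasing is just a cleaner packaging of the same expansion.
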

\begin{proof}
Let $e_{j_1}, e_{j_2},\ldots e_{j_m}$ be the set of small {\col link}s,
where $j_1<j_2<\ldots<j_m$. For $1\le i \le m$, let $s_i$ be
$(q_{j_i}-q_{j_i+1})$. Hence $\{s_i\}_{1\le i\le m}$ is a
permutation of $\{r_i\}_{1\le i\le m}$.

Recall that the sizes of the queues at time $t_0$ are non-increasing
with respect to their indices. Moreover, note that if $j_{i+1}-
j_i\ge 2$ for some $i$, then any packet $p$ that was originally
located at $Q_{m}$, with $m\le j_i+1$ cannot move to $Q_{j_i+2}$ for
all time $t_0\le t\le t'$. Hence we can consider each subset of
consecutive small {\col link}s separately. For example if $j_1,\ldots
,j_m$ are 2,3,5,6,7, then we will consider 2,3 and 5,6,7
separately. Suppose that $j_1,j_2\ldots,j_s$ are consecutive
integers. Since $q_{j_1}^{t'}+\ldots+q_{j_{s+1}}^{t'}
=q_{j_1}^{t_0}+\ldots+q_{j_{s+1}}^{t_0}$, we obtain that $$(q_{j_1}^{t'})^2+\ldots+(q_{j_{s+1}}^{t'})^2 \ge \sum_{i=1}^{s+1}\left(\frac{q_{j_1}^{t_0}+\ldots+q_{j_{s+1}}^{t_0}}{s+1}\right)^2.$$
Thus, the amount of packet transmission via $e_{j_1},\ldots,e_{j_s}$ is
\begin{eqnarray}
&&\{(q_{j_1}^{t_0})^2+\ldots+(q_{j_{s+1}}^{t_0})^2\}
-\{(q_{j_1}^{t'})^2+\ldots+(q_{j_{s+1}}^{t'})^2\}
\nonumber\\
&\le&\{(q_{j_1}^{t_0})^2+\ldots+(q_{j_{s+1}}^{t_0})^2\}
-(s+1)\left(\frac{q_{j_1}^{t_0}+\ldots+q_{j_{s+1}}^{t_0}}{s+1}\right)^2
\nonumber\\
&=&\frac{1}{s+1}\left(s\sum_{i=1}^{s+1}(q_{j_i}^{t_0})^2
-2\sum_{1\le i<k\le s+1}q_{j_i}^{t_0} q_{j_k}^{t_0}\right)
\nonumber\\
&=&\frac{1}{s+1}\left(\sum_{1\le i<k\le s+1}(q_{j_i}^{t_0}-q_{j_k}^{t_0})^2\right)
\nonumber
\end{eqnarray}
\begin{eqnarray}
&\le & \frac{1}{s+1} \binom{s+1}{2} (r_1+\ldots+r_{s+1})^2 \nonumber\\
&= & \frac{s}{2} (r_1+\ldots+r_{s+1})^2 \nonumber\\
&\le & \frac{s}{2} (L_1+\ldots+L_{s+1})^2. \nonumber
\end{eqnarray}
A similar argument holds for other consecutive {\col indices},
separately. Hence the {\col sum of total amount} of transmissions via
small {\col link}s during time $t_0\le t\le t'$ is bounded by
{\col $\frac{m}{2}(L_1+L_2+\ldots+L_m)^2$.}
\end{proof}

$-$ {\textbf Case II-B-1$~~$}  If there is no transmissions via
tall {\col link}s  for all time $t_0\le t\le t_1$. Then by Lemma
\ref{lemma:three}, the total {\col amount} of bad injections  to the
small queues during $t_0\le t\le t_1$ is bounded by
{\col $\frac{m}{2}(L_1+L_2+\ldots+L_m)^2$.
Since each bad injection in the system
of small queues makes potential change among tall queues at least 1,
we conclude that the number of bad packets to the system of
small queues is also at most by $\frac{m}{2}(L_1+L_2+\ldots+L_m)^2$.
Therefore,} for all time $t_0\le
t\le t_1$, the sizes of the small queues are bounded by
{\col $$S_{m+1}=U\left(n-k,M_{k+1},\frac{m}{2}(L_1+L_2+\ldots+L_m)^2\right)$$}
by the induction hypothesis.
Hence, the potential of all the small queues at
time $t_1$ is at most ${\coll \frac{\ep}{2}} L_{m+1}=(n-k) S_{m+1}^2.$

Note that the potential for the tall queues is non-increasing for
$t_0\le t\le t_1$.
{\col By Lemma \ref{lemma:zero2}, the decrease
in potential due to an injection to a tall queue is at least
${\coll \frac{\ep}{2}} R_{\min} M_k$, and the decrease in potential due to a transmission
from a tall queue to a small queue at time $t_1$ is at least
$2R_{\min}(M_k-S_{m+1})$.
Thus, the decrease of potential due to an injection to a tall queue
or the decrease of potential due to a transmission
from a tall queue to a small queue at time $t_1$ is at least
$min\{{\coll \frac{\ep}{2}} R_{\min} M_k,2R_{\min}(M_k-S_{m+1})\}\ge{\coll \frac{\ep}{2}} R_{\min}(M_k-S_{m+1}).$}
Note that from the definition of $M_{k}$,{\col
$${\coll \frac{\ep}{2}} R_{\min}(M_k-S_{m+1}) \ge {\coll \frac{\ep}{2}}{L_{m+1}}+{P_0}.$$}
{\col Therefore, the decrease of the potential} at time $t_1$ is more
than or equal to the potential of all the small queues at $t_1$,
{\col and the difference among them is at least $P_0$.}
By letting $t^*=t_1+\omega-1$, we {\col have} $P(t^*)\le P(t_0)$.\\

$-$ {\textbf Case II-B-2$~~$}  If there is a transmission via some tall
{\col link} for some time $t_0< t\le t_1$, let $t^*$ be the smallest such
$t$. Then similarly, by Lemma \ref{lemma:three}, the total {\col amount} of
bad injections  to the small queues during $t_0\le t\le t^*$ is
bounded by {\col $\frac{m}{2}(L_1+L_2+\ldots+L_m)^2$}.
Hence the sizes of the
small queues during this time interval are bounded by $S_{m+1}$
by the induction hypothesis {\col and from the definition of $t_1$},
so the potential of all the small
queues at time $t^*$ is at most ${\coll \frac{\ep}{2}} L_{m+1}=(n-k) S_{m+1}^2.$
Moreover, during $t_0\le t\le t^*$, for any
tall {\col link} $e_j=(Q_j,Q_{j+1})$, $q_j$ is non-decreasing and
$q_{j+1}$ is non-increasing, {\col because any transmission via
small links can make $q_j$ bigger (when $e_{j-1}$ is a small link),
or $q_{j+1}$ smaller (when $e_{j+1}$ is a small link),
but it cannot increase $q_j-q_{j+1}$. Thus,}
$q_j-q_{j+1}\ge L_{m+1}$ at $ t=t^*$. Hence,
a transmission via a tall {\col link} at time $t^*$ will make the
potential decrease by at least ${\coll \frac{\ep}{2}} L_{m+1}$,
which is more than the potential of all the small queues
at time $t^*$. {\col Note also that} the potential for the tall
queues is non-increasing for $t_0\le t< t^*$.
Hence, we {\col have $P(t^*)\le P(t_0)$.}

\vspace{0.1in}

$-$ {\textbf Case II-C$~~$} {\col Finally, consider the case} when
$r_m \le L_m$ for all $1\le m\le k-1$. Then by Lemma \ref{lemma:three}
and the induction hypothesis, for all time $t_0\le t\le t_1$, the
sizes of small queues are bounded by
{\col $$S_k=U\left(n-k,M_{k+1},\frac{(k-1)}{2}(L_1+L_2+\ldots+L_{k-1})^2\right).$$}
Hence, the potential of all the small queues at time $t_1$ is at most
${\coll \frac{\ep}{2}} L_k=(n-k)S_k^2$.
By Lemma \ref{lemma:zero2}, the decrease
of the potential due to an injection to a tall queue is at least
${\coll \frac{\ep}{2}} R_{\min} M_k$, and the decrease of the potential due to a
transmission from a tall queue to a small queue at time $t_1$ is
at least $2R_{\min}(M_k-S_k)$.
Thus, the decrease of the potential due to a injection to a tall queue
or the decrease of the potential due to a transmission
from a tall queue to a small queue at time $t_1$ is at least
$min\{{\coll \frac{\ep}{2}} R_{\min} M_k,2R_{\min}(M_k-S_k)\}\ge{\coll \frac{\ep}{2}} R_{\min}(M_k-S_k).$
Then, from the definition of $M_{k}$, {\col
${\coll \frac{\ep}{2}} R_{\min}(M_k-S_k) = {\coll \frac{\ep}{2}}{L_k}+{P_0},$} which is more than the
potential of all the small queues at time $t_1$, {\col and the
difference among them is at least $P_0$. Note also that} the
potential of the tall queues is non-increasing for $t_0\le t<t_1$.
Hence, by letting $t^*=t_1+\omega-1$, we {\col have} $P(t^*)\le
P(t_0)$.

\vspace{0.1in}
{\col
Hence in all the cases, we have $P(t^*)\le P(t_0)$ and}
for $t_0\le t\le t^*$, the sizes of small queues are bounded by
$S_j+\omega {\col n R_{\max}}$ for some $1\le j\le k$, so they are bounded by $M_{k}$.
\end{proof}

\section{Characterization of the queue sizes}
We now consider the behavior of the queue sizes under {\coll the} adversarial model. In the case of a stationary stochastic network, the typical ``negative drift'' argument that we described earlier essentially shows that the potential in the system cannot grow much larger than $(k^2 \ve^{-1} (R_{\max})^2)^2$. More precisely, if the potential ever does get larger than that amount then some queue size must be larger than $k^2\ve^{-1}(R_{\max})^2$. At that point the expression for the change in network potential implies the expected drift in potential is non-positive. One consequence of this is that whenever an individual queue size becomes larger than $(k^2 \ve^{-1} (R_{\max})^2)^2$ the expected drift in potential is non-positive.

In contrast, for the \maxwt protocol in the adversarial model the bound on queue size implied by the analysis of Section 4 is actually exponential in the number of users. We now briefly show that this is necessary. In particular, we present an example where the \maxwt protocol does indeed give rise to exponentially-sized queues. Our example is close to an example given in \cite{AndrewsZ-infocom04} in which it was shown that we can get exponential queue sizes in a critically loaded scenario (i.e.\ where $\ve=0$). We now show that this is actually possible in a subcritically loaded example (with $\ve>0$).

We consider a set of $N$ single hop edges (numbered $0,\ldots,N-1$) that are all mutually interfering, i.e. only one edge can transmit data at a time.
Let $a_i(t)$ be the amount of data injected for edge $i$ at time $t$ and let $r_i(t)$ be the edge rate. The adversary defines these quantities in the following simple manner.   At any given time $t$ let $i'=\min\{i:q_i(t) < (1-\ve)2^{i}\}$. If $i'=0$ then the adversary sets $r_0(t)=1$ and $a_0=1-\ve$.  If $i'> 0$ then it sets $r_{i'-1}(t)=1-\ve$, $r_{i'}(t)=\frac{1-\ve}{2}$ and $a_{i'}(t)=\frac{(1-\ve)^2}{2}$.  In both cases all other $r_i(t)$ and $a_i(t)$ values are set to $0$.  It is clear that these definitions are consistent with an $A(1,\ve)$ adversary.

\begin{lemma}
With the above patterns of data arrivals and edge rates, for each $t$ and for each $i$, there exists a $t'\ge t$ such that $q_i(t')\ge (1-\ve)2^{i}$.
\end{lemma}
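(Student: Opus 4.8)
I would argue by contradiction: suppose there is a queue index $i$ and a time $t$ such that for all $t' \ge t$ we have $q_i(t') < (1-\ve)2^i$. Take the smallest such $i$ (call it $i_0$); note $i_0 \ge 1$ since $q_0$ does get filled to $1-\ve$ whenever $i' = 0$ is selected, which happens precisely when $q_0$ drops below $1-\ve$. The key observation is a \emph{monotonicity / conservation} property of the adversary's construction: once $q_{i_0}$ has been ``topped up'' it only leaks downward through the \maxwt schedule, and \maxwt will only ever serve edge $i_0$ when it is the unique maximizer of the weight $r_e q_e$. So I would track the total amount of data that \emph{leaves} queue $i_0$ versus the total that \emph{enters} it.

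\textbf{Key steps, in order.} First I would establish the structural fact that the adversary's rule keeps each $q_j$ essentially pinned near $(1-\ve)2^j$ from below: whenever $i' = \min\{i : q_i(t) < (1-\ve)2^i\}$ equals some value $m$, all the queues $0,\dots,m-1$ are at least $(1-\ve)2^0,\dots,(1-\ve)2^{m-1}$ respectively at that instant, and the adversary injects $\frac{(1-\ve)^2}{2}$ into $q_m$ while giving edge $m-1$ rate $1-\ve$ and edge $m$ rate $\frac{1-\ve}{2}$. Second, I would check what \maxwt does in this configuration: the weights on the only two active edges are $r_{m-1} q_{m-1} \approx (1-\ve)\cdot(1-\ve)2^{m-1}$ and $r_m q_m < \frac{1-\ve}{2}\cdot(1-\ve)2^m = (1-\ve)^2 2^{m-1}$, so \maxwt serves edge $m-1$ (draining $q_{m-1}$ down into the sink or the next queue, depending on the exact topology) and does \emph{not} serve edge $m$; hence on every such step $q_m$ strictly increases by the injected amount and never decreases. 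Third, under the contradiction hypothesis that $q_{i_0}$ stays below $(1-\ve)2^{i_0}$ forever after $t$, the index $i'$ chosen by the adversary is never larger than $i_0$ at any time $\ge t$ — because $i_0$ is by minimality the smallest perpetually-starved index, so some queue among $0,\dots,i_0$ is below threshold at every step, forcing $i' \le i_0$. Consequently edge $i_0$ is \emph{never served} after time $t$ (it only gets rate $\frac{1-\ve}{2}$ when $i' = i_0$, but even then \maxwt prefers edge $i_0 - 1$ as computed above; and when $i' < i_0$, edge $i_0$ gets rate $0$). Fourth, I would count injections into $q_{i_0}$: each time the adversary picks $i' = i_0$ it adds $\frac{(1-\ve)^2}{2}$, and this happens infinitely often — because each such event requires $q_{i_0} < (1-\ve)2^{i_0}$ while $q_0,\dots,q_{i_0-1}$ are all above their thresholds, and since the lower queues \emph{are} repeatedly refilled (they're not perpetually starved by minimality of $i_0$) while the adversary keeps returning to service them, the configuration $i' = i_0$ recurs. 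Since nothing ever leaves $q_{i_0}$ but positive amounts are added infinitely often, $q_{i_0} \to \infty$, contradicting $q_{i_0} < (1-\ve)2^{i_0}$.

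\textbf{Main obstacle.} The delicate point is making precise the claim that the index $i'$ equals $i_0$ \emph{infinitely often} (not merely that $i' \le i_0$ infinitely often). One must rule out the degenerate scenario where the system gets stuck cycling among indices strictly below $i_0$ without ever reaching $i_0$. Here I would use the contradiction hypothesis more carefully: if $i' < i_0$ always after some time, then $q_{i_0}$ is frozen, but also each $q_j$ with $j < i_0$ is being refilled to $(1-\ve)2^j$ exactly when it drops below, and a short induction on $j$ shows the lower queues cannot \emph{all} simultaneously stay below threshold forever — the adversary's service of edge $j-1$ at rate $1-\ve$ together with the injection rule drives $q_j$ up to its threshold in finitely many steps, after which $i'$ must advance past $j$. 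Pushing this induction up to $j = i_0$ forces $i' = i_0$ to occur, and then recur, delivering the needed infinite stream of injections into the frozen queue $q_{i_0}$ — the contradiction. I would also double-check the boundary behavior at $i' = 0$ to confirm $q_0$ genuinely oscillates and is never the perpetually-starved index, anchoring the induction.
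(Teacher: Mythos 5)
Your plan is a proof by contradiction, which is genuinely different in structure from the paper's argument: the paper proves the lemma by a direct (and implicitly strengthened) induction on $i$, showing that for every $t$ there is a later time at which \emph{all} of $q_0,\ldots,q_i$ are simultaneously at or above their thresholds $(1-\ve)2^j$. That simultaneity is exactly what guarantees $i'=i$ occurs, after which the paper tracks $q_i$ monotonically up to threshold and then waits (using the inductive hypothesis again) for the lower queues to refill. Your contradiction route is sound in outline and shares the same central structural observation (when $i'=i$ the \maxwt weight comparison $(1-\ve)\cdot(1-\ve)2^{i-1}$ vs.\ $\tfrac{1-\ve}{2}q_i < (1-\ve)^2 2^{i-1}$ shows edge $i-1$ is served and edge $i$ is not, so $q_i$ only increases), but the part you flag as the ``main obstacle'' is where the sketch is not yet a proof.

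The gap is in the ``short induction on $j$.'' You claim that once $q_j$ reaches its threshold, ``$i'$ must advance past $j$,'' and that pushing this up to $j=i_0$ forces $i'=i_0$. But $i'$ does not advance monotonically: as soon as $i'=j+1$ is chosen, edge $j$ is served at rate $1-\ve$, which can push $q_j$ back below threshold and return $i'$ to $\le j$. Nothing in your inductive statement rules out the system cycling forever among indices $<i_0$ without $i'$ ever reaching $i_0$. The clean way to close this (and salvage your approach) is the pigeonhole you essentially set up but did not invoke: if $i'<i_0$ for all $t\ge T$, let $m<i_0$ be the \emph{largest} index with $i'=m$ infinitely often, and let $T'\ge T$ be such that $i'\le m$ for all $t\ge T'$. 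After $T'$, edge $m$ is never served (when $i'=m$ Max-Weight prefers edge $m-1$; when $i'<m$ edge $m$ has rate $0$; and $i'=m+1$ no longer occurs), so $q_m$ is nondecreasing and gains $\tfrac{(1-\ve)^2}{2}$ at each of the infinitely many times $i'=m$; hence $q_m\to\infty$, contradicting $q_m<(1-\ve)2^m$ at those times. (The case $m=0$ needs a one-line separate check since there is no edge $m-1$: if $i'=0$ at time $t$ then $q_0(t+1)=q_0(t)/2+(1-\ve)\ge 1-\ve$, so $i'\ne 0$ at $t+1$, contradicting $i'\le 0$ always.) With that filled in, your argument is correct; as written, the step forcing $i'$ to hit $i_0$ is not.
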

\begin{proof}
We prove the above statement by induction on $i$.   Suppose that $q_0(t)<1-\ve$. Then for this time step $i'$ is set to $0$ and so $a_0(t)=1-\ve$. Once data has been served for edge $0$ and the arriving data has been added to the edge's queue we have $q_0(t+1)\ge 1-\ve$.  (Note that this assumes that data arrives in a queue after data has been served.  This is a reasonable assumption but if it does not hold then we can simply set $\om\ge 2$ and have all the arrivals in a window of length $\om$ arrive at the beginning of the window.) This completes the base case.

For the inductive step, suppose that $q_i(t)< (1-\ve)2^i$ for an $i>0$. The inductive hypothesis implies that there exists some time $t'\ge t$ at which $i'=i$. Suppose that $t'$ is the first such time step. Between $t$ and $t'$ note that we must have $i'<i$ and so the value of $q_i$ does not change.  When we reach time step $t'$ it must be the case that $q_i(t')<2q_{i-1}(t')$.  Moreover, by the definition of the edge rates $r_{i-1}(t')=1-\ve$ and $r_{i}(t')=\frac{1-\ve}{2}$. Hence the \maxwt protocol serves queue $i-1$ but the arrivals are for queue $i$.  Hence $q_i(t')$ is strictly greater than $q_i(t)$. By repeating this process we eventually reach a time $t''$ at which $q_i(t'')\ge (1-\ve)2^i$.

By the inductive hypothesis there must be a time $t'''\ge t''$ for which $q_j(t''')\ge (1-\ve)2^j$ for all $j\le i-1$.  Between times $t''$ and $t'''$ the value of $q_i$ cannot decrease. Hence at time $t'''$ we have $q_j(t''')\ge (1-\ve)2^j$ for all $j\le i$. The inductive step is complete.
\end{proof}
\begin{corollary}
There exists a network configuration with $N$ edges and an $A(1,\ve)$ adversary such that some queue grows to size $(1-\ve)2^{N-1}$.
\end{corollary}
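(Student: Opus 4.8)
The plan is to obtain the corollary directly from the preceding Lemma, once we have checked that the arrival/rate pattern described just above the Lemma really is an $A(1,\ve)$ adversary in the sense of Definition~\ref{def:one}.

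First I would verify the adversary conditions. Every nonzero edge rate used by the construction lies in $\{1,\,1-\ve,\,\frac{1-\ve}{2}\}$ and every nonzero injection lies in $\{1-\ve,\,\frac{(1-\ve)^2}{2}\}$, so taking $R_{\max}=1$ and $R_{\min}=\frac{1-\ve}{2}$ the boundedness and bounded-away-from-zero assumptions hold. Since $\om=1$, each window $W_j$ is the single slot $\{j\}$. At every slot exactly one edge, say edge $i$, has a nonzero injection, and the injected amount equals $(1-\ve)$ times the rate offered on edge $i$ in that slot: indeed $a_0=1-\ve=(1-\ve)\cdot 1$ when $i'=0$, and $a_{i'}=\frac{(1-\ve)^2}{2}=(1-\ve)\cdot\frac{1-\ve}{2}$ when $i'>0$. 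Hence the stabilizing algorithm $T$ can simply serve, in the same slot in which it is injected, all the data just injected for the unique active edge; this is feasible because only that one edge is active, and each packet reaches its (one-hop) destination within $[t_p,t_p]$, so the ``$(1-\ve/2)$ fraction within the window'' requirement is met with the full fraction, and all queues stay empty under $T$, so $T$ is stable. For the load inequality of Definition~\ref{def:one}, the packets of $I^{j-1}$ are all delivered in slot $j-1$ and so contribute nothing to slot $j$, while for $p\in I^{j}$ and edge $e$ we have $\sum_{p\in I^{j}}\ell(p,e,j)=a_e(j)=(1-\ve)r_e(j)=\sum_{t'\in W_j}(1-\ve)r_e(t')$, so the inequality holds (with equality). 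This confirms that we are in the setting of an $A(1,\ve)$ adversary, as claimed.

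Second, I would apply the preceding Lemma with $t=0$ and $i=N-1$: it yields a time $t'\ge 0$ at which $q_{N-1}(t')\ge (1-\ve)2^{N-1}$, which is exactly the claimed queue growth, so the \maxwt queue for edge $N-1$ indeed grows to size $(1-\ve)2^{N-1}$.

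I expect no genuinely hard step here: the corollary is an immediate specialization of the Lemma, and the only point requiring a little care is the bookkeeping in the Definition~\ref{def:one} check --- in particular observing that, under the schedule $T$, the ``previous-window'' packets $I^{j-1}$ place no load on slot $j$, so the per-slot load constraint is met with equality rather than violated.
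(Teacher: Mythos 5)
Your proposal is correct and follows essentially the same route as the paper: the corollary is stated as an immediate consequence of the preceding Lemma applied with $i=N-1$, and the paper asserts without further elaboration that the constructed arrival/rate pattern is consistent with an $A(1,\ve)$ adversary. You simply spell out the verification of Definition~\ref{def:one} that the paper treats as ``clear,'' which is sound and adds no new ideas.
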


We remark in conclusion that with a different protocol adversarial models do not necessarily lead to exponentially large queues.  In \cite{AndrewsZ-infocom04} another protocol was presented (which directly keeps track of the past history of edge rates and arrivals) which ensures a maximum queue size of $O(\om k|\mathcal{R}|^2R_{\max})$, where $\mathcal{R}$ is the set of feasible rate values.   However, we still feel that it is of interest to study the performance and stability of the \maxwt protocol in adversarial networks since it is extremely simple to implement and it has been proposed so many times in the literature as a solution to the scheduling problem in wireless networks.

\section{Stability of Approximate Max - weight}
As remarked in the introduction, computing the exact Max-Weight set of feasible transmissions is in general an NP-hard problem.
Hence a natural question to ask is what can be achieved if at each time step we only find an approximate Max-Weight set of feasible transmissions.
In this section we address this question.

Recall that \A$~$assures that there is a {\coll set of fractional movement of packets}
$\Psi_p$ for each $p\in I^{W}$ and there is a edge rate vector
$r_e \in R(t)$ for each $t\in W$, so that each edge is used at most
$(1-\ep)$ times of the sum of rates associated at $e$ during
the time window $W$. Thus, it guarantees that each edge $e$ can transmit more data than is actually required by a ${\coll \frac{1}{1-\ep}}$ factor. Hence, the actual packet movement by \maxwt induces potential changes that are ${\coll \frac{1}{1-\ep}}$ times greater than necessary.

{\col
For an optimization problem,
an $\ep$-approximation algorithm is an algorithm that provides an
approximate solution within $(1\pm\ep)$ factor of the optimal solution.
Although computing the optimal solution of \maxwt is computationally very
hard, in many practical wireless networks $\ep$-approximate solution
for $r(t)$ can be computed in polynomial time. {\blue For example, \cite{Baker94} presented an $\ep$-approximate solution to find the MWIS
(maximum weight independent set) on planar graphs, and this was extended by several authors to more general classes of graphs.} In \cite{GJSS08}, \cite{GJSS09}, an $\ep$-approximate MWIS for a
large class of wireless networks in the Euclidean space is provided.
In our model, we assume an $\ep$-approximate \maxwt computes an
$\ep$-approximate solution $r'(t)$ for each time $t$ so that the potential decrease is
at least $(1-\ep)$ times the maximum {\coll possible} potential decrease at time $t$.
We will prove the stability of any $\hat{\ep}$-approximate \maxwt protocol {\coll under} \A$~$for $\ep>0$, if $0<\hat{\ep}<\ep$.

\begin{theorem}
For $0<\hat{\ep}<\ep$, any $\hat{\ep}$-approximate \maxwtbt is stable under \A.
\end{theorem}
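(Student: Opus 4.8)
\end{theorem}

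\begin{proof}
\emph{(Proof outline.)} The plan is to re-run the proof pipeline of Theorems~\ref{lemma:zero}, \ref{thm:two} and \ref{cor:one} with the $\hat\ep$-approximate \maxwtbt protocol in place of exact \maxwt, keeping track of the one point at which exactness was used. Inspecting the proof of Theorem~\ref{lemma:zero}, exactness enters only in the inequality chain through~(\ref{eq:f}) and~(\ref{eq:33}): there one uses that \maxwtbt chooses, at each time $t'$, a vector $(s_e(t'))_e$ maximizing the first-order term $J(t')=\sum_{e}s_e(t')\left((q^{t'}_{v,d^{(e)}})^{\beta}-(q^{t'}_{u,d^{(e)}})^{\beta}\right)$, so that $J(t')\ge\sum_j K_{e_j}(t')$ for the feasible ``$d$-assignment'' $d_{p,e}(t')$ obtained from the adversary's stabilizing routing $T$ inflated by the factor $\tfrac{1}{1-\ep}$. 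For an $\hat\ep$-approximate solution this reads instead $J'(t')\ge(1-\hat\ep)J^{*}(t')\ge(1-\hat\ep)\sum_j K_{e_j}(t')$, where $J^{*}(t')$ is the exact optimum at time $t'$ and $J'(t')$ the value attained by $r'(t')$.

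With this weaker inequality in hand I would construct the partial-transmission assignment $\Gamma_p$ exactly as in~(\ref{eq:g2})--(\ref{eq:g4}), except that at each link $e_j$ and time $t'$ it is made to match $(1-\hat\ep)K_{e_j}(t')$ rather than $K_{e_j}(t')$. This is still feasible, since $\sum_j(1-\hat\ep)K_{e_j}(t')\le J'(t')$, and the remaining properties of $\Gamma_p$ (nonnegativity, $\sum_{p,d}s_{p,(e,d)}(t')\le s'_e(t')$, and consistency with the routes $\Psi_p$) carry over verbatim. Hence the potential decrease that $\Gamma_p$ charges to $p$ equals $(1-\hat\ep)$ times the decrease the original $d$-assignment charges to $p$, i.e.\ $\tfrac{1-\hat\ep}{1-\ep}$ times the decrease produced by $T$'s own routing of $p$. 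Feeding the factor $\tfrac{1-\hat\ep}{1-\ep}$ through the Appendix~A computation in place of $\tfrac{1}{1-\ep}$ then yields that the net potential change caused by the injection of $p$ into a queue of height $q$ together with $\Gamma_p$ is at most $-c(\ep,\hat\ep)\,\ell_p(\beta+1)q^{\beta}+\ell_p O(q^{\beta-1})$ for an explicit constant $c(\ep,\hat\ep)$ which is strictly positive whenever $\hat\ep<\ep$ (using, if needed, the flexibility in the delivered-fraction hypothesis of Definition~\ref{def:one}). This is the exact analogue of Theorem~\ref{lemma:zero} with $c(\ep,\hat\ep)$ replacing $\tfrac{\ep}{1-\ep/2}$.

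From here no new idea is needed. The notions of good and bad packet, Lemma~\ref{lemma:zero2} (a net decrease of at least $\tfrac{c(\ep,\hat\ep)}{2}\,\ell_p q$ for a good injection into a sufficiently tall queue), the general adversarial queue system \Ab, and the inductive proof of Theorem~\ref{thm:two} all go through word for word with $c(\ep,\hat\ep)$ in place of $\ep$ in the constants $q^{*}$, $P_0$, $M_k$, $L_k$, $S_k$. Since the $\hat\ep$-approximate \maxwtbt still transmits only along links with $q^{t}_{v,d}-q^{t}_{u,d}\ge0$ (it never moves data from a shorter to a taller queue), Theorem~\ref{thm:two} applies directly and gives the stability claimed.

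The step I expect to be the main obstacle is the constant accounting in the second paragraph: one must verify that, after the approximation loss, $c(\ep,\hat\ep)$ remains positive precisely under the hypothesis $\hat\ep<\ep$ --- i.e.\ that the adversary's $\tfrac{1}{1-\ep}$ capacity slack strictly dominates the $(1-\hat\ep)$ approximation shortfall --- and to re-derive the lower-order $O(q^{\beta-1})$ error terms with the approximate rates $r'(t')$ (still bounded by $R_{\max}$) in place of the exact ones. Everything else is a transcription of the existing arguments under the substitution $\ep\mapsto c(\ep,\hat\ep)$.
\end{proof}
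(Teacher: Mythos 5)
Your proposal follows essentially the same approach as the paper: you correctly identify that the $(1-\hat\ep)$ approximation factor enters precisely at the step where \maxwtbt maximizes $J(t')$, construct $\Gamma'_p$ by matching $(1-\hat\ep)K_{e_j}(t')$ instead of $K_{e_j}(t')$, and then replay Theorems~\ref{lemma:zero}--\ref{thm:two} with the weakened drift constant. The paper carries out the ``constant accounting'' you defer by observing that the given adversary is also an $A(\omega,\tilde\ep)$-adversary for $\tilde\ep=\frac{\ep-\hat\ep}{1-\hat\ep}$, then setting $d'_{p,e}(t')=\frac{1-\ep}{1-\tilde\ep}d_{p,e}(t')=(1-\hat\ep)d_{p,e}(t')$ so that the drift bound becomes $-\frac{\tilde\ep}{2}(\beta+1)\ell_p q^\beta$, which is exactly your $c(\ep,\hat\ep)$ made explicit.
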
}

\begin{proof}
Let $\tilde{\ep}=\frac{\ep-\hat{\ep}}{1-\hat{\ep}}$,
then \A$~$ is $A(\omega,\tilde{\ep})$.
As in the statement of Theorem~\ref{lemma:zero}, if we can associate the injection of
$p\in {\coll I^j \cup I^{j-1}}$ with a set of partial transmissions $\Gamma'_p$, so that
the sum of potential changes due to this injection to a queue of
height $q\ge q^*$ is less than ${\coll -\frac{\tilde{\ep}}{2}} (\beta+1)\ell_p q^\beta$, then all
the other arguments in the proof of Theorem~\ref{thm:two} holds when we replace
$\ep$ with $\tilde{\ep}$.

As in the proof of Theorem~\ref{lemma:zero}, we define
$d'_{p,e}(t')=\frac{1-\ep}{1-\tilde{\ep}}d_{p,e}(t')$ for all $e\in E$,
$t'\in W$, and $p\in {\coll I^j \cup I^{j-1}}$. {\blue
Then we show that $d'_{p,e}(t’)$ satisfy (\ref{eq:c}) if we substitute $\ep$ by $\tilde{\ep}$.
As in the proof of Theorem~\ref{lemma:zero} in \cite{tech}, we define}
\begin{eqnarray}\label{xx}
K'_{e_j}(t')&=&\sum_{i=1}^{m} d'_{p_i,(v_j,u_j)}(t')\left((q^{t'}_{v_j,d_i})^\beta-(q^{t'}_{u_j,d_i})^\beta\right)
\nonumber\\
&=&(1-\hat{\ep})K_{e_j}(t').
\end{eqnarray}

By the definition of $\hat{\ep}$-approximate {\scshape Max-Weight},
we can take $\hat{s}_{e_j}(t')$ for each $e_j=(v_j,u_j)\in E$,
$t'\in W$ such that
\begin{eqnarray}\label{xy}
J'(t')&:=&\sum_{j=1}^{k} \hat{s}_{e_j}(t) \{(q^t_{v_j,d_a^{(e_j)}})^{\beta}-(q^t_{u_j,d_a^{(e_j)}})^{\beta}\}.
\nonumber\\
&\ge& \sum_{j=1}^{k} (1-\hat{\ep}) s_{e_j}(t) \{(q^t_{v_j,d^{(e_j)}})^{\beta}-(q^t_{u_j,d^{(e_j)}})^{\beta}\}
\end{eqnarray}
for some destinations $d_a^{(e_j)}$ for each $e_j.$
By (\ref{xx}) and (\ref{xy}),
we can recursively assign
\begin{equation}
\hat{s}_{p_i,(e_j,d_i)}(t')=min\left\{\frac{{J'}^{((j-1)m+(i-1))}(t')}{(q^{t'}_{v_j,d_i})^{\beta}-(q^{t'}_{u_j,d_i})^{\beta}},\hat{s}_{e_j}^{(i-1)}(t'), \frac{{K'}_{e_j}^{(i-1)}(t')}{(q^{t'}_{v_j,d_i})^\beta-(q^{t'}_{u_j,d_i})^\beta}\right\},
\nonumber
\end{equation}
where $e_j=(v_j,u_j)$ and $d_i$ is the destination of $p_i$,
in the same manner as in the proof of Theorem~\ref{lemma:zero}.
Then, for all
$e_j\in E$, $t'\in W$, and $p_i\in {\coll I^j \cup I^{j-1}}$,
\begin{equation}
\sum_{e\in E}\sum_{p,d}\hat{s}_{p,(e,d)}(t')\left((q^{t'}_{v,d})^{\beta}-(q^{t'}_{u,d})^{\beta}\right)
\le \sum_{e\in E} \hat{s}_{e}(t')\left((q^{t'}_{v,d^{(e)}})^{\beta}-(q^{t'}_{u,d^{(e)}})^{\beta}\right).
\nonumber
\end{equation}
Let $\Gamma'_{p_i}=(\hat{s}_{p_i,e_j}(t'))_{e_j\in E, t'\in W}$ for each $p_i \in {\coll I^j \cup I^{j-1}}$,
we obtain that the sum of potential changes due to this
injection is less than -$\frac{\tilde{\ep}}{2}\ell_{p_i} (\beta+1) q^\beta$ by using the
same argument in section 4.1.
This in turn implies that $\hat{\ep}$-\maxwtbt algorithm is stable under \A.
\end{proof}

\section{Experiments}
\subsection{Simulation Setup}
We now describe a numerical experiment that aims to understand the queue size dynamics of the \maxwt protocol under the adversarial model.
Consider a $n_1 \times n_2$ simple grid
graph $G$, and let $n=n_1n_2$. Then, there are $4n-2n_1-2n_2$ directed edges in the graph.
We assume that all single nodes can be a destination.
We let $n_1=3, n_2=4,$ so $n=12$, and $4n-2n_1-2n_2=34$.

\begin{figure}
\centering
\epsfig{file=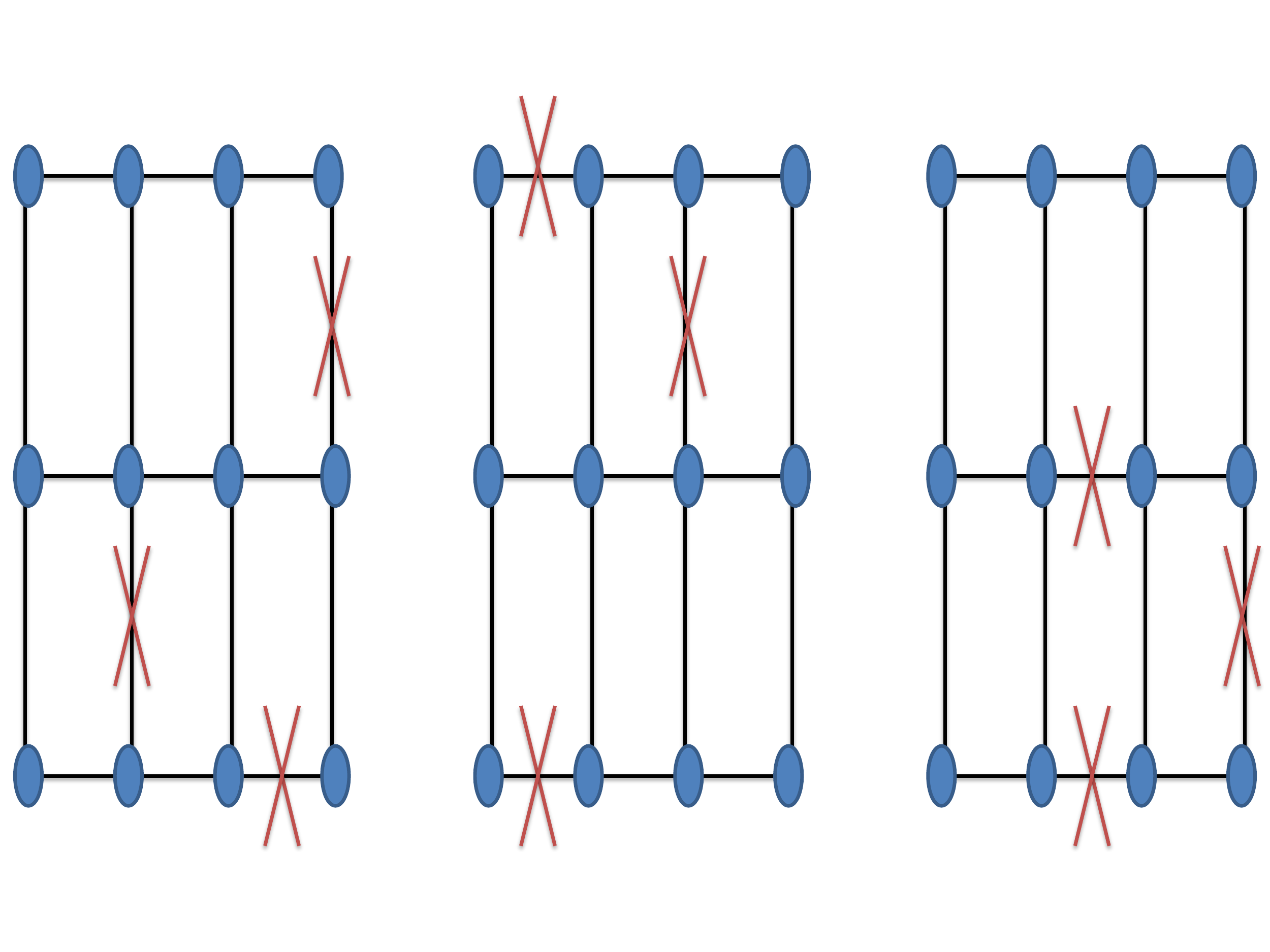, height=0.8in, width=3in}
\caption{The above 3 underlying graphs express which edges are not available under $r^{(1)}, r^{(2)}, r^{(3)}$.
}
\end{figure}

In our simulation, we used 3 different edge rate vectors $r^{(1)},
r^{(2)}, r^{(3)}\in \mathbb{R}^{34}$ for $G$.
For each $r^{(i)}$, $1\le i\le 3$, we select 3 edges among 17 possible edges,
and remove them. The underlying graphs of $r^{(1)}, r^{(2)}, r^{(3)}$
are described in Fig 3. Other directed edges have edge rates chosen
independently and uniformly at random from [0.5,2].
We used the node-exclusive constraint model, i.e., matching constraint model.

Among $n(n-1)$ many distinct source-destination pairs (S-D pairs),
we randomly chose K many S-D pairs $(s_1,d_1),\ldots,(s_K,d_K)$
for $K=10$. When the set of feasible edge rate vectors $R$ is fixed
for all time $t\ge0$, we define the feasible {\em arrival rate} as follows.
The collection of all the feasible arrival rate vectors are called
the network stability region.

\begin{definition}
The {\em arrival rate vector} $\gamma=(\gamma_1,\ldots,\gamma_K)\in[0,1]^K$ corresponding
to the S-D pairs $(s_1,d_1),\ldots,\\(s_K,d_K)$ is said
to be {\em feasible}, if there exist flows, $(f^1,\ldots,f^K)$ such that
\begin{enumerate}
\item For each $1\le j \le K$, $f^j$ routes a flow of at
least $r_j$ from $s_j$ to $d_j$.
\item The induced net flow on the directed edges, $\hat{f}=\sum_{i=1}^K f^j$
belongs to the interior of $co(R)$ where $co(R)$ is the convex hull of $R$.
\end{enumerate}
\end{definition}

If an arrival rate vector is in the interior of $co(R)$,
and the arrivals are identical for all time, then \maxwt is stable \cite{GJSS08}.
Moreover, if an arrival rate vector is in {\coll the interior of} ${co(R)}^c$, then \maxwt is unstable.
We chose $K$ many source-destination pairs at random.
For each $r^{(i)}$, $1\le i\le 3$, we compute 3
different feasible arrival rate vectors that are closed to the boundary
of the network stability region. To do so, we fixed random arrival rate
vectors $\gamma^{(1)}, \gamma^{(2)}, \gamma^{(3)}$ such that each
entry has a value from $[0.5,2]$. We computed constants $c_{ij},$
by binary search, for edge rate vector $r^{(i)},$ and arrival rate vector $\gamma^{(j)}$ so that $c_{ij} \gamma^{(j)}$ is stable under
{\scshape Max-Weight}, and $(c_{ij}+0.001) \gamma^{(j)}$ is not stable under {\scshape Max-Weight},
as described in Fig 4. Each $c_{ij}$ varied from $0.098$ to $0.178$ in our simulation.
We used a sufficiently large time window of size $10^6$ so that we could check the stability.
\begin{figure}
\centering
\epsfig{file=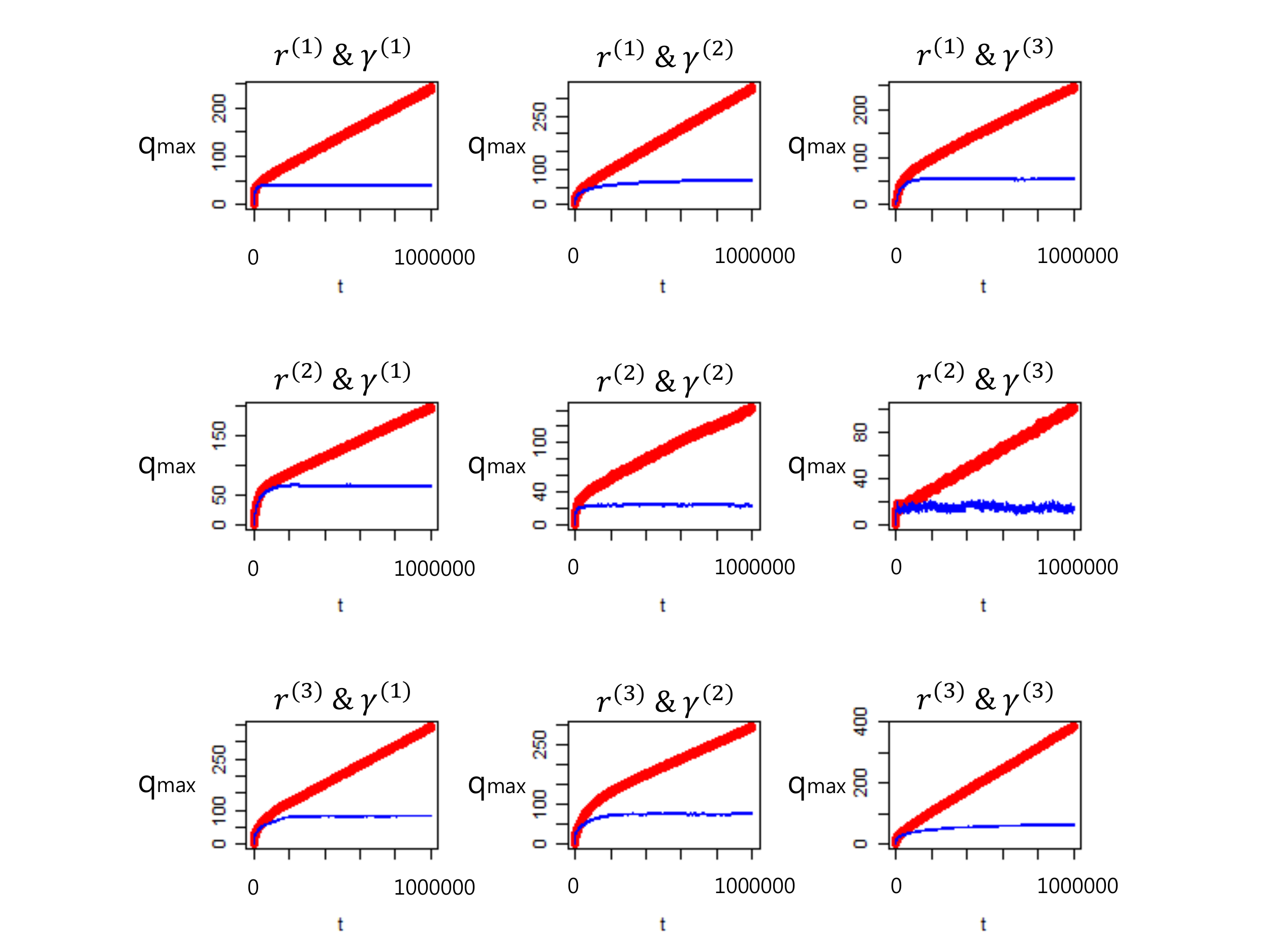, height=3.3in, width=4.4in}
\caption{For each pair of edge rate and arrival rate vector, the plot represents the change of the maximum size of queues for $c_{ij} \gamma^{(j)}$ and $(c_{ij}+0.01) \gamma^{(j)}$ in the time window [0,$10^6$].}
\end{figure}

We did two set of experiments. In both of those experiments, we divided the time $t\ge0$ into non-overlapped sub-windows of ordered phases. The first phase is $t\in[1,\lceil1.5\rceil]$, the second phase is $t\in[\lceil1.5\rceil+1,\lceil1.5+(1.5)^2\rceil],$ and for each $i\ge1$, the $i$th phase is: $t\in[\lceil \sum_{j=1}^{i} (1.5)^{j-1} \rceil+1,\lceil \sum_{j=1}^{i} (1.5)^{j} \rceil].$

In the first experiment, we fixed the edge rate vector $r^{(i)}$ for some $i\in\{1,2,3\}$.
Over time the adversary injects packets as
follows.  For $t\ge0$, if $t$ is in the $j$-th phase,
then inject packets with an arrival rate
{\blue $c_{i\bar{j}}\gamma^{(\bar{j})}$} where $\bar{j}\in\{1,2,3\}$ and $\bar{j}\equiv j$~$(mod$~$3)$.

In the second experiment, over time the adversary determines edge rate vectors and packet arrivals as follows.
For $t\ge0$, if $t$ is in the $i$-th phase,
we assign an edge rate vector $r^{(\bar{i})}$
where $\bar{i}\in\{1,2,3\}$ and $\bar{i}\equiv i$~$(mod$~$3)$,
and we assign an arrival rate vector {\blue $c_{\bar{i}j}\gamma^{(j)}$} at random.

Notice that, in both experiments, the average of the arrival rate vectors
until time T does not converge as T goes to infinity.
Also in the second experiment, the same holds for the edge rate vectors.
However the above injections satisfy the definition of $A(\om,\ep)$ for some $\om>0$ and a small $\ep>0.$
In both setups, we observed
the dynamics of the maximum queue sizes over time.

\begin{figure}
\centering
\epsfig{file=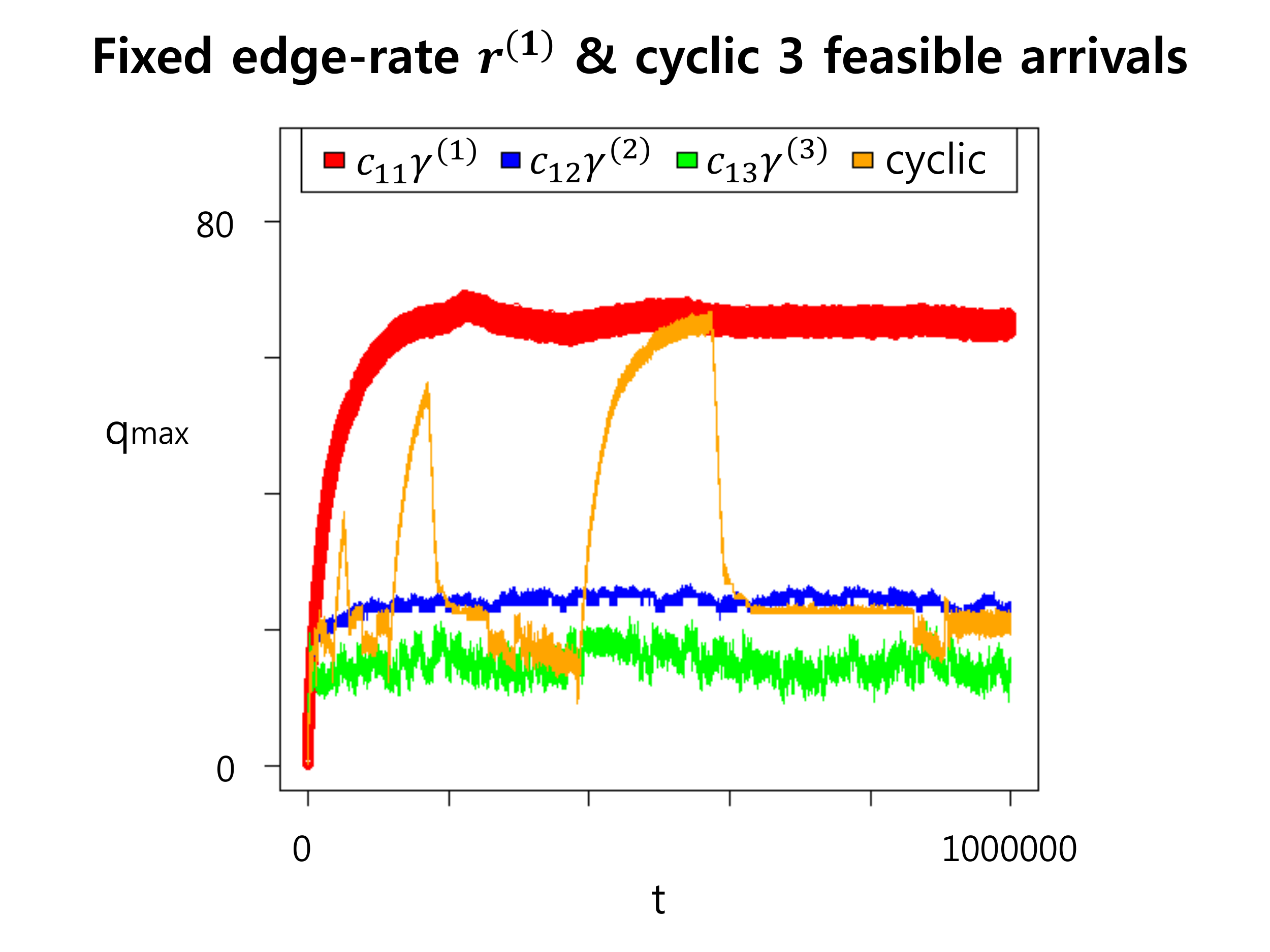, height=3.3in, width=4.4in}
\caption{For the edge rate vector $r^{(1)}$, we plot the maximum queue
size when we use fixed arrival rate vectors {\blue $c_{11}\gamma^{(1)}$,
$c_{12}\gamma^{(2)}$, $c_{13}\gamma^{(3)}$}, and a cyclic arrival rate
vector.}
\end{figure}

\begin{figure}
\centering
\epsfig{file=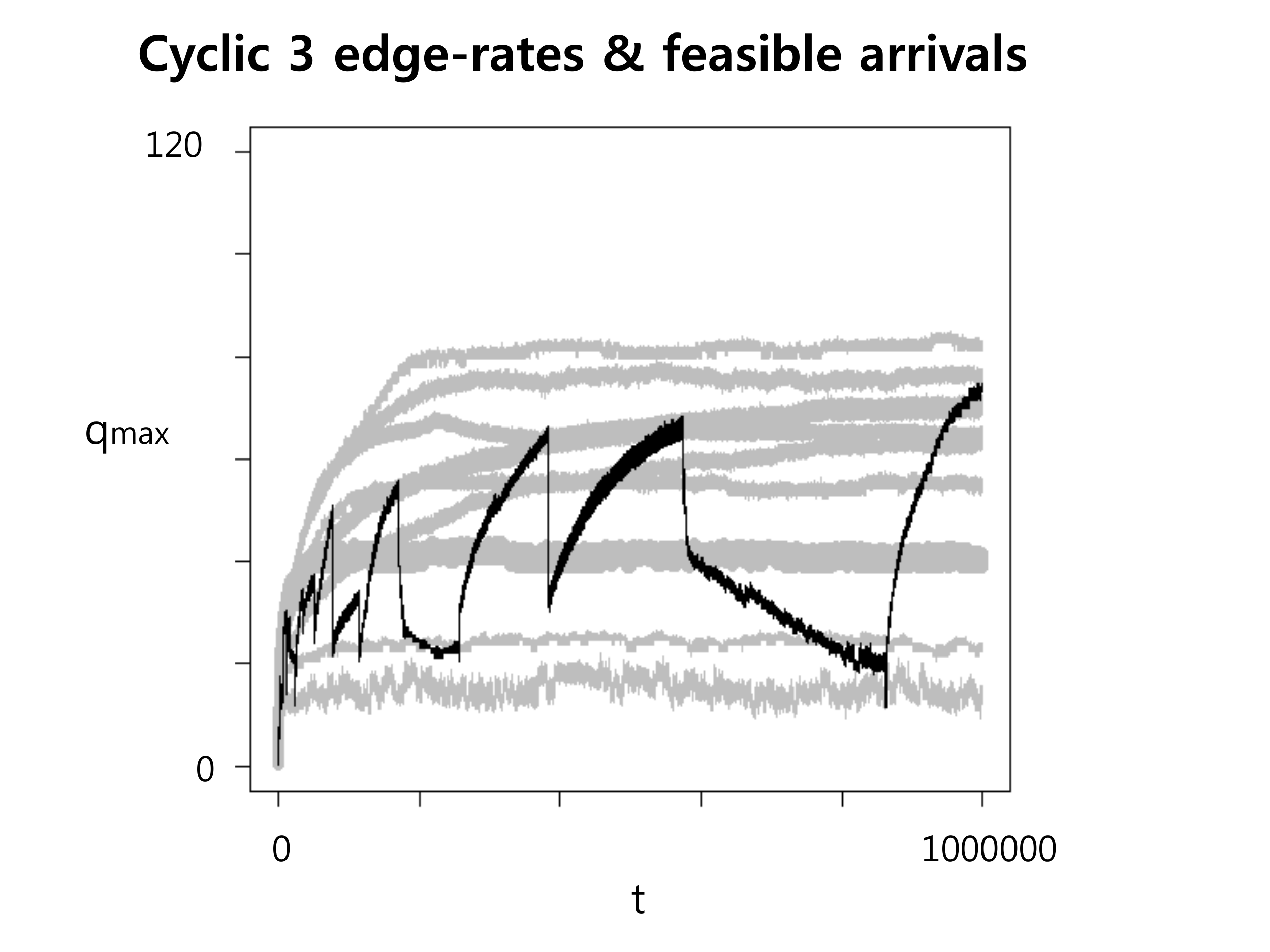, height=3.3in, width=4.4in}
\caption{We use the randomly cyclic edge and arrival rate pairs.
It shows the stability of {\scshape Max-Weight}.}
\end{figure}

\subsection{Simulation Results}
For the first experiment, as Fig 5 shows, for each edge rate vector,
\maxwt is stable with the above cyclic rate vectors. Interestingly, the
maximum queue size may increase in some sub-window,
but it decreases rapidly when the new sub-window starts.
This is because the congested edges are different for each
arrival rate vector, and the traffic-congestions are resolved when the
arrival rate is changed. Notice that the maximum queue sizes for the cyclic rate vector case are bounded above and bounded below by some fixed arrival rate vector cases respectively.

The queue dynamics for the second experiment are described in Fig 6.
{\coll
The gray lines describe queue sizes for fixed edge and arrival rate vectors. The black line describes the queue size for the cyclic rate vector case.}
Again, the maximum queue sizes for the cyclic rate vector case are bounded above and bounded below by some fixed edge and arrival rate vector cases respectively. From our two experiments we observe that
\maxwt make the system stable under $A(\om,\ep)$ even when the edge and arrival rate vectors do not converge over time.

\section{Conclusion}
In this paper we have shown that the \maxwt
protocol remains stable even when the traffic arrivals and {\coll edge rates} are determined in an adversarial manner.

In our opinion the most natural open question concerns the bound on queue size.
Our analysis gives a
bound that is exponential in the network size and we
have shown in Section 5 that such a bound is unavoidable in the
general case.  However, achieving these large queue sizes involves
choosing the achievable rate vectors $R(t)$ in a very specific
manner.  We are interested in whether there are any simple sufficient
conditions on the sets $R(t)$ which would ensure that such large
queues do not occur.

\bibliographystyle{plain}
\bibliography{mybibfile}

\appendix
\section{{\coll Remaining proof of Theorem~\ref{lemma:zero}}}

\begin{proof}
Now, for $t,t'\in W$, $|q^t_{u,d}-q^{t'}_{u,d}|\le n R_{\max}\omega$
since at each time slot, at most $R_{\max}$ amount of data can move along a {\col link from} $u$. Hence by considering $\omega$ and
$n$ and $R_{\max}$ as constants, we obtain that for any $t,t'\in W$,
$(q^{t'}_{u,d})^\beta=(q^t_{u,d})^\beta+O\left((q^{t}_{u,d})^{\beta-1}\right).$

{\col Suppose that a packet $p$ with size $\ell_p$ is injected at a
node $v_0$ at time $t_0\in W$. Let $d$ be the
destination of $p$. Then, the potential change due to the injection of $p$ is,
\begin{eqnarray}
&& \sum_{(x,y)\in E} \sum_{t'\in W} s_{p,((x,y),d)}(t')(\beta+1)
\left|(q^{t_0}_{x,d})^\beta-(q^{t_0}_{y,d})^\beta
+O\left((q^{t_0}_{x,d})^{\beta-1}+(q^{t_0}_{y,d})^{\beta-1}\right)\right|
\nonumber\\
&=& \sum_{e=(v,u)\in\Psi_p} \sum_{t'\in W} d_{p,e}(t')(\beta+1)
\left|(q^{t_0}_{v,d})^\beta-(q^{t_0}_{u,d})^\beta
+O\left((q^{t_0}_{v,d})^{\beta-1}+(q^{t_0}_{u,d})^{\beta-1}\right)\right|
\nonumber\\
&\ge& \sum_{e=(v,u)\in\Psi_p}
\frac{1}{1-\ep} \{ \sum_{t'\in W} \ell(p,e,t') \}(\beta+1)
\left|(q^{t_0}_{v,d})^\beta-(q^{t_0}_{u,d})^\beta
+O\left((q^{t_0}_{v,d})^{\beta-1}+(q^{t_0}_{u,d})^{\beta-1}\right)\right|
\nonumber\\
&\ge& \sum_{e=(v,u)\in\Psi_p}
\frac{1-\ep/2}{1-\ep}\ell_p(\beta+1)
\left|(q^{t_0}_{v,d})^\beta-(q^{t_0}_{u,d})^\beta
+O\left((q^{t_0}_{v,d})^{\beta-1}+(q^{t_0}_{u,d})^{\beta-1}\right)\right|
\nonumber\\
&\ge& \sum_{e=(v,u)\in\Psi_p}
\frac{\ell_p}{1-\ep/2}(\beta+1)
\left|(q^{t_0}_{v,d})^\beta-(q^{t_0}_{u,d})^\beta
+O\left((q^{t_0}_{v,d})^{\beta-1}+(q^{t_0}_{u,d})^{\beta-1}\right)\right|
\nonumber\\
&\ge& \frac{1}{1-\ep/2}\ell_p(\beta +1)(q^{t_0}_{v_0,d})^\beta +\ell_p
O\left((q^{t_0}_{v_0,d})^{\beta-1}\right).
\nonumber
\end{eqnarray}}
because $\{\sum_{t'\in W} \ell(p,e,t')\}\ge (1-\frac{\ep}{2})\ell_p$ holds for all packet $p$ and edge $e$.

The increase of
potential due to the direct injection of $p$ is $\ell_p
(\beta+1)(q^{t_0}_{v_0,d})^\beta + \ell_p
O((q^{t_0}_{v_0,d})^{\beta-1})$. Hence the total change of potential
induced by this injection {\col of a packet $p$ is,
\begin{eqnarray}
&&\ell_p (\beta+1)(q^{t_0}_{v_0,d})^\beta +\ell_p O\left((q^{t_0}_{v_0,d})^{\beta-1}\right)
\nonumber\\
&-&\sum_{(x,y)\in E} \sum_{t'\in W} s_{p,((x,y),d)}(t')(\beta+1)
\left|(q^{t_0}_{x,d})^\beta-(q^{t_0}_{y,d})^\beta
+O\left((q^{t_0}_{x,d})^{\beta-1}+(q^{t_0}_{y,d})^{\beta-1}\right)\right|
\nonumber\\
&\le&
-\frac{\ep/2}{1-\ep/2}\ell_p(\beta +1)(q^{t_0}_{v_0,d})^\beta +\ell_p
O\left((q^{t_0}_{v_0,d})^{\beta-1}\right).
\nonumber
\end{eqnarray}}
Hence there is a constant $q^*$, depending on $n$, $\omega$ and
$\ep$, so that if $q\ge q^*$ the sum of potential changes due to the
injection is less than $-\frac{\ep}{2}\ell_p q^\beta$.
\end{proof}

\end{document}